\def\ps@pprintTitle{%
 \let\@oddhead\@empty
 \let\@evenhead\@empty
 \def\@oddfoot{\centerline{\thepage}}%
 \let\@evenfoot\@oddfoot}
\newtheorem{prop}{Proposition}
\newtheorem{Lemma}{Lemma}
\newtheorem{Remark}{Remark}
\begin{document}
\begin{frontmatter}

%\begin{document}

%\newtheorem{Proposition}
% Le titre du papier
\title{No-idle, no-wait: when shop scheduling meets dominoes, eulerian and hamiltonian paths}

% Le titre court
\def\shorttitle{Shop schedules, dominoes and paths}

% Les auteurs et leur num�ro d'affiliation
\author[1]{J-C. Billaut}
\author[2]{F. Della Croce}
\author[2]{F. Salassa}
\author[1]{V. T'kindt}

\address[1]{\small Universit\'{e} Francois-Rabelais de Tours, ERL CNRS OC 6305, Tours, France}
\address[2]{\small DIGEP, Politecnico di Torino, Corso Duca degli Abruzzi 24, Torino, Italy}

\begin{abstract}
In shop scheduling, several applications exist where it is required that some components 
perform consecutively. We refer to no-idle schedules if  
machines are required to operate with no inserted idle time and no-wait schedules if 
tasks cannot wait between the end of an operation and the start of the following one.
We consider here no-idle/no-wait shop scheduling problems with makespan
as performance measure and determine related complexity results.
We first analyze 
%problem $F2 | no-idle, no-wait | C_{\max}$ 
the two-machine no-idle/no-wait flow shop problem and 
show that it is equivalent to a special version of the game of dominoes which is
polynomially solvable by tackling an Eulerian path problem on a directed graph. 
We present for this problem an $O(n)$ exact algorithm.
As a byproduct we show that 
%We show that 
%problem $F2 | no-idle, no-wait | C_{\max}$ 
the Hamiltonian Path problem on a digraph  
$G(V,A)$ with a special structure (where every pair of vertices $i,j$
either has all successors in common or has no common successors)
reduces to the two-machine no-idle/no-wait flow shop problem.
%(in other words, either the successors of $i$ coincide with the successors of $j$ or $i,j$ have 
%distinct successors).
Correspondingly, we provide a new polynomially solvable special case of the Hamiltonian Path problem.
%We denote this latter problem as the Common/Distinct Successors Hamiltonian cycle (CDSHDP) problem. 
Then, we show that also the corresponding $m$-machine no-idle no-wait flow shop problem %$F | no-idle, no-wait | C_{\max}$ 
is polynomially solvable and provide an $O(mn\log n)$  exact algorithm. Finally we prove that the 
2-machine no-idle/no-wait job shop problem
and the 2-machine no-idle/no-wait open shop problem
%problems $J2 | no-idle, no-wait | C_{\max}$ 
%and $O2 | no-idle, no-wait | C_{\max}$
are $NP$-Hard in the strong sense.
\end{abstract}

\begin{keyword} 
No-idle no-wait shop scheduling \sep dominoes \sep eulerian path \sep hamiltonian path \sep Numerical Matching with Target Sums 
\end{keyword}

\end{frontmatter}

\section{Introduction}

In shop scheduling, typically when machines represent very expensive equipments and the fee is directly linked to the actual time consumption, it is of interest to 
determine so-called no-idle solutions, that is schedules where machines process the jobs continuously without inserted idle time.
Also, particularly in metal-processing industries, where
delays between operations interfere with the technological process,
it is required to obtain no-wait schedules where each job is subject to the so-called no-wait constraint, that is it cannot be idle between the completion of an operation and the start of the following one. 
\bigskip

\noindent

We attack here the simultaneous combination of these two requirements by considering no-idle/no-wait shop scheduling problems with three different shop configurations namely flow shop, job shop and open shop.
We focus on the makespan as performance measure.  
We first deal with the two-machine flow shop problem and the m-machine flow shop problem showing that both are polynomially solvable and present connections with the game of dominoes and well-known graph problems. 
Then, we prove that the two-machine job shop and the two-machine open shop are $NP$-hard in the strong sense.
Using the standard three-field notation \cite{LLRS93}, 
the three two-machine shop problems are denoted as 
$F2 | no-idle, no-wait | C_{\max}$ for the flow shop, $J2 | no-idle, no-wait | C_{\max}$ for the job shop and  
$O2 | no-idle, no-wait | C_{\max}$ for the open shop, respectively, while the general flow shop case
is denoted as $F | no-idle, no-wait | C_{\max}$.

\bigskip

\noindent
With respect to the relevant literature, in one of the pioneering works in scheduling \cite{Johnson54}, it is shown that problem 
$F2 || C_{\max}$ is solvable in $O(n log n)$ time by first arranging the jobs with $p_{1,j}\leq p_{2,j}$ in non-decreasing order 
of $p_{1,j}$, followed by the remaining jobs
arranged in non-increasing order of $p_{2,j}$, where $p_{i,j}$ denotes the processing time of job $J_j$  on machine $M_i$.
As mentioned in \cite{Adiri}, $F2 | no-idle | C_{\max}$ can also be solved in $O(n log n)$ time by simply packing the jobs on the second machine once the schedule computed by the algorithm in \cite{Johnson54} is given.
In \cite{Reddi72}, it is shown that problem $F2 | no-wait | C_{\max}$ can be seen as a special case of the Gilmore-Gomory
Travelling Salesman Problem \cite{gigo64} and therefore is solvable too in $O(n log n )$ time.
Besides, problems $F3 || C_{\max}$, $F3 | no-idle | C_{\max}$ and $F3 | no-wait | C_{\max}$
were all shown to be $NP$-hard in the strong sense by \cite{gjs76}, \cite{bl97} and \cite{Rock84} respectively.
In \cite{Adiri}, it is reported that 
both problems $F2 | no-idle| \sum C_j$
and  $F2 | no-wait| \sum C_j$ are $NP$-hard by exploiting the fact that the $NP$-hardness proof 
of problem $F2 | | \sum C_j$ in \cite{gjs76} was given by constructing a flow shop instance 
that happened to be both no-idle and no-wait.
Similar consideration holds for problem $F2 | no-idle, no-wait| \sum C_j$.
For surveys on $no-idle$ flow shop scheduling, we refer to
\cite{gs09,CH14}.
For surveys on no-wait scheduling, we refer to \cite{HS96,AL16}.
In \cite{kk07}, the links between problems $F | no-idle | C_{\max}$ and $F2 | no-wait | C_{\max}$
are discussed and some efficiently solvable special cases are shown.
The recent literature on $no-wait$ flow shop scheduling includes \cite{hjm12} where it is shown that
minimizing the number of interruptions on the last machine is solvable in $O(n^2)$ time on two machines (the problem is denoted as 
$F2 | no-wait| \cal{G}$) while it is 
$NP$-hard on three or more machines. 
Finally, we mention the contribution of \cite{GI01}, where it is shown that, if some processing times are allowed  to be zero and zero processing times imply that the corresponding operations
should not be performed, then problem $F2 | no-idle, no-wait | C_{\max}$ is 
$NP$-hard in the strong sense.
Here, we deal with the standard versions of 
no-idle/no-wait shop problems where all processing times are required to be strictly positive.
\bigskip

\noindent

%\noindent
%We consider two standard different shop configurations namely flow shop 
%(the order of processing is $M_1 \rightarrow M_2$ for all jobs) and job shop
%(there are two job subsets and each job is restricted to having exactly two operations; the jobs in one job subset follow the 
%$M_1 \rightarrow M_2$  processing route while the jobs in the other job subset follow the  $M_2 \rightarrow M_1$ processing route).
%\noindent

%We will focus primarily on the makespan as performance measure. Using the general three-field notation, 
%the shop problems are denoted as 
%$F2 | no-idle, no-wait | C_{\max}$ for the flow shop and $J2 | no-idle, no-wait | C_{\max}$ for the job shop , respectively.

%\bigskip

\bigskip

The paper proceeds as follows.
In Section \ref{fss}, we show that 
problem $F2 | no-idle, no-wait | C_{\max}$ is equivalent to an oriented version of the 
Single Player Dominoes problem which has been shown in \cite{dmw14}
to be polynomially solvable and 
presents an $O(n)$ time solution approach.
As a byproduct, we also consider a special case of the Hamiltonian Path problem (denoted as Common/Distinct Successors Directed Hamiltonian Path - CDSDHP - problem) on a directed graph $G(V,A)$ with a specific structure
so that every pair of vertices $i,j$
either has all successors in common or has no common successor.
In other words, either the successors of $i$ coincide with the successors of $j$ or $i,j$ have 
distinct successors. We prove that problem CDSDHP reduces to problem $F2 | no-idle, no-wait | C_{\max}$.
Correspondingly, we provide a new polynomially solvable special case of the Hamiltonian Path problem.
In Section \ref{fmss}, we show that also
the general $F | no-idle, no-wait | C_{\max}$ problem with $m$ machines is polynomially solvable and
present an $O(mn \log n)$ time solution approach.
Finally, section \ref{jss} provides the unary $NP$-Hardness proof of problems $J2 | no-idle, no-wait | C_{\max}$ and 
$O2 | no-idle, no-wait | C_{\max}$.

\section{Two-machine no-idle no-wait flow shop scheduling}
\label{fss}

In the $F2 | no-idle, no-wait | C_{\max}$ problem,  a set of $n$ jobs is available at time zero. Each job $j$ must be processed  non-preemptively on two continuously available machines $M_1,M_2$ with known integer processing times  $p_{1,j}, p_{2,j} > 0$, respectively. 
Each machine is subject to the so-called no-idle constraint, namely, it processes continuously one job at a time, and operations of each job cannot overlap. Each job is subject to the so-called no-wait constraint, namely, it cannot be idle between the completion of the first operation and the start of the second operation.
All jobs are processed first on machine $M_1$ and next on machine $M_2$ and, given the no-wait constraint, the jobs sequences on the two machines must be identical. 
Let us denote by $p(A) = \sum_{j=1}^n p_{1,j}$ the sum of processing times on the first machine and by $p(B) = \sum_{j=1}^n p_{2,j}$ the sum of processing times on the second machine.
%Each machine can process at most one job at a time and the operations of each job cannot %overlap. Also, f
For any given sequence
$\sigma$, $[j]_{\sigma}$ denotes the job in position $j$. 
%Notice, that the extension to the general m-machine case implies that each job $j$ has processing times $p_{i,j}$, for all machines $M_1, ..., M_m$.
\bigskip

\noindent
Consider Figure \ref{f2_picturea} which provides an illustrative example of a feasible no-idle, no-wait schedule for a 4-job problem.

\begin{figure*}[!ht]
    \begin{center}
        \ifx\JPicScale\undefined\def\JPicScale{1}\fi
\unitlength \JPicScale mm
\begin{picture}(80,18)(0,0)
\linethickness{0.3mm}
\put(7,2){\line(1,0){73}}
\linethickness{0.3mm}
\put(7,0){\line(0,1){3}}
\linethickness{0.3mm}
\put(17,0){\line(0,1){3}}
\linethickness{0.3mm}
\put(27,0){\line(0,1){2}}
\linethickness{0.3mm}
\put(37,0){\line(0,1){2}}
\linethickness{0.3mm}
\put(47,0){\line(0,1){2}}
\linethickness{0.3mm}
\put(57,0){\line(0,1){2}}
\linethickness{0.3mm}
\put(67,0){\line(0,1){2}}
\linethickness{0.3mm}
\put(77,0){\line(0,1){3}}
\linethickness{0.3mm}
\put(12,0){\line(0,1){3}}
\linethickness{0.3mm}
\put(22,0){\line(0,1){2}}
\linethickness{0.3mm}
\put(32,0){\line(0,1){2}}
\linethickness{0.3mm}
\put(42,0){\line(0,1){2}}
\linethickness{0.3mm}
\put(52,0){\line(0,1){2}}
\linethickness{0.3mm}
\put(62,0){\line(0,1){2}}
\linethickness{0.3mm}
\put(72,0){\line(0,1){3}}
\linethickness{0.3mm}
\put(7,12){\line(1,0){73}}
\linethickness{0.3mm}
\put(7,10){\line(0,1){3}}
\linethickness{0.3mm}
\put(17,10){\line(0,1){2}}
\linethickness{0.3mm}
\put(27,10){\line(0,1){2}}
\linethickness{0.3mm}
\put(37,10){\line(0,1){2}}
\linethickness{0.3mm}
\put(47,10){\line(0,1){2}}
\linethickness{0.3mm}
\put(57,10){\line(0,1){2}}
\linethickness{0.3mm}
\put(67,10){\line(0,1){3}}
\linethickness{0.3mm}
\put(77,10){\line(0,1){3}}
\linethickness{0.3mm}
\put(12,10){\line(0,1){2}}
\linethickness{0.3mm}
\put(22,10){\line(0,1){2}}
\linethickness{0.3mm}
\put(32,10){\line(0,1){2}}
\linethickness{0.3mm}
\put(42,10){\line(0,1){2}}
\linethickness{0.3mm}
\put(52,10){\line(0,1){2}}
\linethickness{0.3mm}
\put(62,10){\line(0,1){3}}
\linethickness{0.3mm}
\put(72,10){\line(0,1){3}}
\put(-1,15){\makebox(0,0)[cc]{$M_1$}}

\put(-1,4){\makebox(0,0)[cc]{$M_2$}}

\linethickness{0.3mm}
\put(7,12){\line(1,0){12}}
\put(19,12){\line(0,1){6}}
\put(7,12){\line(0,1){6}}
\put(7,18){\line(1,0){12}}
\linethickness{0.3mm}
\put(19,2){\line(1,0){25}}
\put(19,2){\line(0,1){6}}
\put(44,2){\line(0,1){6}}
\put(19,8){\line(1,0){25}}
\linethickness{0.3mm}
\put(19,12){\line(1,0){25}}
\put(19,12){\line(0,1){6}}
\put(44,12){\line(0,1){6}}
\put(19,18){\line(1,0){25}}
\linethickness{0.3mm}
\put(44,8){\line(1,0){10}}
\put(44,2){\line(0,1){6}}
\put(54,2){\line(0,1){6}}
\put(44,2){\line(1,0){10}}
\linethickness{0.3mm}
\put(44,18){\line(1,0){10}}
\put(44,12){\line(0,1){6}}
\put(54,12){\line(0,1){6}}
\put(44,12){\line(1,0){10}}
\linethickness{0.3mm}
\put(54,2){\line(1,0){8}}
\put(54,2){\line(0,1){6}}
\put(62,2){\line(0,1){6}}
\put(54,8){\line(1,0){8}}
\linethickness{0.3mm}
\put(54,18){\line(1,0){8}}
\put(54,12){\line(0,1){6}}
\put(62,12){\line(0,1){6}}
\put(54,12){\line(1,0){8}}
\linethickness{0.3mm}
\put(62,8){\line(1,0){11}}
\put(62,2){\line(0,1){6}}
\put(73,2){\line(0,1){6}}
\put(62,2){\line(1,0){11}}
\put(13,15){\makebox(0,0)[cc]{$P_{1,[1]}$}}

\put(31,15){\makebox(0,0)[cc]{$P_{1,[2]}$}}

\put(31,5){\makebox(0,0)[cc]{$P_{2,[1]}$}}

\put(49,15){\makebox(0,0)[cc]{$P_{1,[3]}$}}

\put(49,5){\makebox(0,0)[cc]{$P_{2,[2]}$}}

\put(58,15){\makebox(0,0)[cc]{$P_{1,[4]}$}}

\put(68,5){\makebox(0,0)[cc]{$P_{2,[4]}$}}

\put(58,5){\makebox(0,0)[cc]{$P_{2,[3]}$}}

\end{picture}				
        \caption[Fig]{A no-idle no-wait schedule for a 2-machine flow shop}
        \label{f2_picturea}
    \end{center}
\end{figure*}
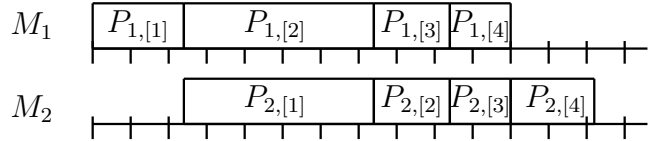

We point out  that the $no-idle,no-wait$ requirement is very strong.
Indeed, any feasible sequence $\sigma$ for problem $F2 | no-idle,no-wait | C_{\max}$,
forces consecutive jobs to share common processing times in such a way that
\begin{eqnarray}\label{myeq}
p_{2,[j]_{\sigma}} = p_{1,[j+1]_{\sigma}} \;\; \forall j \in \,...,n-1.   
\end{eqnarray}

As mentioned in the introduction, a related problem denoted by $F2 | no-wait| \cal{G}$ is tackled in \cite{kk07}, where the aim is to minimize the number of interruptions (idle times) on $M_2$. We remark that this problem does not constitute a generalization 
of problem $F2 | no-idle, no-wait | C_{\max}$ since an optimal solution with no interruptions 
of problem $F2 | no-wait| \cal{G}$ may be non-optimal for problem $F2 | no-idle, no-wait | C_{\max}$.
Consider a $2-job$ instance with processing times $p_{1,1}=b$, $p_{2,1}=a$, $p_{1,2}=a$, $p_{2,2}=b$, with $b>a$.
Then, sequence $(J_1,J_2)$ is no-idle, no-wait, has makespan $C_{\max}^{(J_1,J_2)}=2b+a$ and is optimal for problem $F2 | no-wait| \cal{G}$ 
as it has no interruptions. However, it is not optimal for problem
$F2 | no-idle, no-wait | C_{\max}$ as sequence $(J_2,J_1)$ is also no-idle, no-wait and has makespan $C_{\max}^{(J_2,J_1)}=2a+b < 2b+a$.
%To the authors knowledge, no works are available on shop %scheduling dealing with the contemporaneous presence of no-idle %and no-wait
%constraints. In this section we focus on the complexity status %of the two-machine no-idle no-wait flow shop problem.\\

\subsection{The $F2 | no-idle, no-wait | C_{\max}$ problem and the game of dominoes}
We first provide a lemma on specific conditions of any instance of problem $F2 | no-idle, no-wait | C_{\max}$
for which a feasible solution may exist.

\begin{Lemma}\label{l0}
\noindent
\begin{enumerate}
\item[(C1)] A necessary condition to have a feasible solution for problem $F2 | no-idle, no-wait | C_{\max}$ is 
that there always exists an indexing of the jobs so that $p_{1,2},...p_{1,n}$ and $p_{2,1},...,p_{2,n-1}$
constitute different permutations of the same vector of elements.
\item[(C2)] When the above condition (C1) holds, then
\begin{enumerate}
\item[Case 1] if $p_{1,1} \neq p_{2,n}$, every feasible sequence must have a job with processing time $p_{1,1}$ in first position and a job 
with processing time $p_{2,n}$ in last position.
%Correspondingly the makespan is $p_{1,1}+p(B)$.
%\item[Case 2] if $p_{1,1} = p_{2,n}$ and there exists a feasible sequence,
\item[Case 2] if $p_{1,1} = p_{2,n}$ 
then there exists at least $n$ feasible sequences each starting with a different job by simply rotating the starting sequence
as in a cycle.
\end{enumerate}
\end{enumerate}
\end{Lemma}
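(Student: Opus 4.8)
The plan is to make everything rest on a strengthening of identity~(\ref{myeq}): a sequence $\sigma$ admits a no-idle, no-wait schedule \emph{if and only if} $p_{2,[j]_\sigma}=p_{1,[j+1]_\sigma}$ for every $j=1,\dots,n-1$. Necessity is exactly~(\ref{myeq}); for sufficiency I would start $M_1$ at time $0$, pack the jobs on $M_1$ in the order of $\sigma$, and, by the no-wait rule, let each job enter $M_2$ the instant it leaves $M_1$. Since the $M_1$-completion of the job in position $j+1$ exceeds that of the job in position $j$ by $p_{1,[j+1]_\sigma}=p_{2,[j]_\sigma}$, the $M_2$-operations abut one another with neither gap nor overlap, so both machines are idle-free and the schedule is no-wait; hence $\sigma$ is feasible. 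I use this equivalence freely below.

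To prove (C1): suppose a feasible sequence exists and index the jobs by the position they occupy in it, so that job $j$ sits in position $j$. The equivalence then gives $p_{1,j+1}=p_{2,j}$ for $j=1,\dots,n-1$, i.e.\ $(p_{1,2},\dots,p_{1,n})$ and $(p_{2,1},\dots,p_{2,n-1})$ are the same vector, in particular permutations of one common vector, which is the claimed necessary condition.

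To prove (C2): fix an indexing as furnished by (C1) and set $x:=p_{1,1}$, $y:=p_{2,n}$. Let $P_1,P_2$ be the multisets of all first-machine and all second-machine processing times; condition (C1) says precisely $P_1\uplus\{y\}=P_2\uplus\{x\}$. Now encode an arbitrary feasible sequence by its value chain $a_1,\dots,a_{n+1}$, where $a_j$ is the $M_1$-time of the job in position $j$ and $a_{n+1}$ the $M_2$-time of the last job; by~(\ref{myeq}) the job in position $j$ is the one with processing-time pair $(a_j,a_{j+1})$, so these $n$ pairs are exactly the $n$ jobs. Reading off the two coordinates gives $\{a_1,\dots,a_n\}=P_1$ and $\{a_2,\dots,a_{n+1}\}=P_2$, whence $P_1\uplus\{a_{n+1}\}=P_2\uplus\{a_1\}$; combining this with $P_1\uplus\{y\}=P_2\uplus\{x\}$ and cancelling $P_2$ yields the multiset identity $\{x,a_{n+1}\}=\{a_1,y\}$. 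In Case~1 the hypothesis $x\neq y$ forces $a_1=x$ and $a_{n+1}=y$, i.e.\ the first job of every feasible sequence has $M_1$-time $p_{1,1}$ and the last has $M_2$-time $p_{2,n}$. In Case~2 the hypothesis $x=y$ forces $a_{n+1}=a_1$, so the chain of pairs $(a_1,a_2),(a_2,a_3),\dots,(a_n,a_1)$ closes into a cycle; each of its $n$ cyclic rotations lists the same $n$ jobs and still satisfies the chain condition (now cyclically), hence is feasible by the equivalence above, and the rotation by $k$ places the job originally in position $k+1$ first, giving $n$ distinct feasible sequences with $n$ distinct first jobs.

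The delicate point is the reading of Case~2. Condition (C1) is only necessary, so it does not by itself produce a feasible sequence (it does not force the multiset of job-pairs to decompose into a single chain rather than several disjoint ones), and the claim should be understood as follows: starting from one feasible sequence, its $n$ rotations are all feasible. Making this precise is exactly why the sufficiency half of the feasibility equivalence in the first paragraph is needed — it is what guarantees that the rotated chains are realisable schedules; the remainder of the argument is multiset bookkeeping.
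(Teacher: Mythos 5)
Your proof is correct and follows essentially the same route as the paper: everything is derived from the chain identity~(\ref{myeq}), Case~1 by arguing that the endpoint values are forced, and Case~2 by closing the chain into a cycle and rotating. The two ingredients you add --- the explicit sufficiency of the chain condition for feasibility (which justifies that rotations are realisable schedules) and the multiset cancellation $\{x,a_{n+1}\}=\{a_1,y\}$ --- are exactly the steps the paper's two-sentence proof leaves implicit, so your write-up is a more rigorous rendering of the same argument rather than a different one.
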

\begin{proof}
Condition (C1) trivially holds from expression (\ref{myeq}).\\
For condition C2, if $p_{1,1} \neq p_{2,n}$, then the processing time on the first machine of the job in first position must be equal to $p_{1,1}$ (similarly the processing time on the second machine of the job in last position must be equal to $p_{2,n}$) or else 
there is no way to fulfil expression (\ref{myeq}). 
Besides, If $p_{1,1} = p_{2,n}$ and a feasible sequence $\sigma$ exists fulfilling expression (\ref{myeq}), then
$p_{1,[1]_{\sigma}} = p_{2,[n]_{\sigma}}$ also holds. But then, any forward or backward rotation of $\sigma$ also provides a feasible solution. 
\end{proof}

Next, we show that for any feasible solution the makespan is determined by
the processing time of the first job on the first machine plus $p(B)$.

The following lemma holds.

\begin{Lemma}\label{l1}
The makespan of any feasible sequence $\sigma$ is given by the processing time 
of the first job on the first machine plus the sum of jobs processing times on the second machine.
\end{Lemma}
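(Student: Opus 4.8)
The plan is to observe that a fixed feasible sequence $\sigma$ pins down, essentially uniquely, the timing of the whole no-idle, no-wait schedule, after which the makespan drops out of a one-line computation.

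First I would normalize the schedule so that $M_1$ starts processing at time $0$ (delaying the start of $M_1$ only delays every completion, so this is without loss of generality for the makespan). Then the completion time of job $[j]_{\sigma}$ on $M_1$ equals $\sum_{k=1}^{j} p_{1,[k]_{\sigma}}$, and the no-wait constraint applied to the first job forces it to start on $M_2$ at time $p_{1,[1]_{\sigma}}$; since $M_2$ must itself run without inserted idle time, this fixes the start time of $M_2$ altogether.

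Next I would show, by induction on $j$, that job $[j]_{\sigma}$ starts on $M_2$ at time $p_{1,[1]_{\sigma}} + \sum_{k=1}^{j-1} p_{2,[k]_{\sigma}}$: the base case $j=1$ is the remark just made, and the inductive step uses that $M_2$ has no idle time, so that the start of $[j+1]_{\sigma}$ on $M_2$ coincides with the completion of $[j]_{\sigma}$ on $M_2$. Setting $j=n$ and adding $p_{2,[n]_{\sigma}}$ then gives
\[
C_{\max} \;=\; p_{1,[1]_{\sigma}} + \sum_{k=1}^{n} p_{2,[k]_{\sigma}} \;=\; p_{1,[1]_{\sigma}} + p(B),
\]
which is the claimed expression.

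The one subtle point — more a check than a genuine obstacle — is to confirm that the $M_2$ start times dictated by the no-idle requirement are consistent with the no-wait constraint for \emph{all} jobs, i.e.\ that job $[j]_{\sigma}$ actually finishes on $M_1$ exactly when $M_2$ becomes free for it. This is precisely where feasibility of $\sigma$, in the form of relation (\ref{myeq}), is used: it yields $\sum_{k=1}^{j-1} p_{2,[k]_{\sigma}} = \sum_{k=2}^{j} p_{1,[k]_{\sigma}}$, so that the computed $M_2$-start time of $[j]_{\sigma}$ coincides with its $M_1$-completion time $\sum_{k=1}^{j} p_{1,[k]_{\sigma}}$, and the schedule is well defined with no conflict. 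As a small but useful by-product, the formula shows that the makespan depends on $\sigma$ only through its first job.
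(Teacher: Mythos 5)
Your proof is correct and takes essentially the same route as the paper: the paper likewise computes $C_{[j]_{\sigma}}=p_{1,[1]_{\sigma}}+\sum_{i=1}^{j}p_{2,[i]_{\sigma}}$ by chaining the no-idle/no-wait conditions through the sequence and then sets $j=n$. Your explicit induction and the consistency check via relation (\ref{myeq}) merely spell out details the paper leaves implicit.
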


\begin{proof}
As mentioned above, for any feasible sequence $\sigma$, we have 
$p_{2,[j]_{\sigma}} = p_{1,[j+1]_{\sigma}}$ $\forall j \in 1,...,n-1$.
Correspondingly, 
\begin{eqnarray}
C_{[1]_{\sigma}}=p_{1,[1]_{\sigma}}+p_{2,[1]_{\sigma}} \nonumber \\
C_{[2]_{\sigma}}=p_{1,[1]_{\sigma}}+p_{2,[1]_{\sigma}}+p_{2,[2]_{\sigma}} \nonumber \\
\cdots \nonumber \\
C_{[j]_{\sigma}}=p_{1,[1]_{\sigma}}+\sum_{i=1}^jp_{2,[i]_{\sigma}}  \nonumber 
\end{eqnarray}\label{eq1}

Hence,
\begin{eqnarray}
C_{\max_{\sigma}} = C_{[n]_{\sigma}} =p_{1,[1]_{\sigma}}+\sum_{i=1}^np_{2,[i]_{\sigma}} = p_{1,[1]_{\sigma}}+\sum_{i=1}^np_{2,i}.
% = a_{[1]_{\sigma}}+\sum_{i=1}^n a_{[i]}.
\end{eqnarray}\label{eq2}
\end{proof}

\bigskip
\bigskip

\noindent
%Due to Lemma \ref{l1}, if a feasible solution exists, then 
One may consider approaching problem $F2 | no-idle,no-wait | C_{\max}$
by solving first problem $F2 | no-idle,no-wait | \cal{G}$. Then, if 
a solution without interruption is found, it is immediate to determine 
the optimal $F2 | no-idle,no-wait | C_{\max}$ sequence by exploiting 
Lemmata \ref{l0}, \ref{l1}. Alternatively, $F2 | no-idle,no-wait | C_{\max}$ has no feasible solution.
This would induce, however, the same $O(n^2)$ complexity of the algorithm provided in \cite{hjm12} for problem $F2 | no-idle,no-wait | \cal{G}$.  
\bigskip

\noindent
To reach a better complexity, we strongly exploit the specific no-idle/no-wait constraint that strictly links
%The peculiarity of the no-idle, no-wait effect strictly links 
the $F2 | no-idle, no-wait | C_{\max}$ problem 
to the game of dominoes.
Dominoes are $1$ x $2$ rectangular tiles with each $1$ x $1$ square marked with spots
indicating a number. A traditional set of dominoes consists of all 28 unordered pairs of numbers between
0 and 6. We refer here to the generalization of dominoes presented in \cite{dmw14}
in which $n$ tiles are present, each of the tiles can have
any integer (or symbol) on each end and not necessarily all pairs of numbers are present.

\begin{figure*}[!ht]
    \begin{center}
        \includegraphics[height=2.5cm,clip=true]{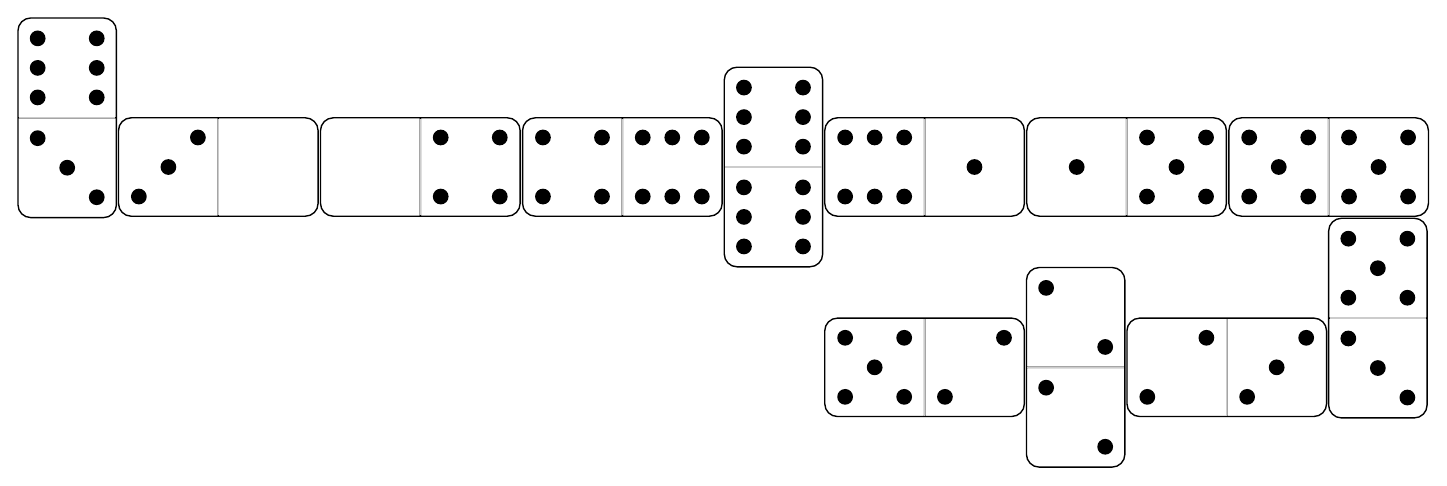}
%\vspace*{-1.2cm}				
        \caption[Fig]{Solution of a Single Player Dominoes problem with 12 dominoes}
        \label{domino}
    \end{center}
\end{figure*}

\bigskip
\noindent
In \cite{dmw14}, it is shown that the Single Player Dominoes ($SPD$) problem (where a single player 
tries to lay down all dominoes in a chain with the numbers matching at each adjacency)
is polynomially solvable as it can be seen as the solution of an Eulerian path problem
on an undirected multigraph (in a way similar to the result provided in \cite{hjm12} but requiring lower complexity). 
Figure \ref{domino} shows the solution of an $SPD$ problem with 12 tiles with numbers 
included between $0$ and $6$.

\noindent
We refer here to the oriented version of $SPD$, denoted by $OSPD$, where all dominoes
have an orientation, e.g. if the numbers are $i$ and $j$, only the orientation $i \rightarrow j$ is allowed
but not viceversa. The following Lemma holds.

\begin{Lemma}\label{ospd}
Problem $OSPD$ is polynomially solvable.
\end{Lemma}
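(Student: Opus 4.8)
The plan is to recast $OSPD$ as a directed Eulerian path problem, mirroring for the oriented case the reduction used for $SPD$ in \cite{dmw14}. First I would build a directed multigraph $G=(V,A)$ whose vertex set $V$ consists of all integers (symbols) occurring on at least one end of some domino, and which contains, for each oriented domino carrying numbers $i,j$ with allowed orientation $i\to j$, one arc $(i,j)\in A$; parallel arcs are kept whenever several dominoes bear the same ordered pair. A legal chain laying down all $n$ dominoes so that the right end of each matches the left end of the next is then, by construction, exactly a trail in $G$ using every arc precisely once, i.e. an Eulerian path (or circuit) of $G$. This correspondence is a bijection between the solutions of the $OSPD$ instance and the Eulerian paths/circuits of $G$; note that the orientation is what forces a \emph{directed} Eulerian problem here, as opposed to the undirected one obtained for $SPD$ where tiles may be flipped.

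Next I would invoke the classical characterization of directed Eulerian paths. Restricting attention to the vertices incident to at least one arc, $G$ admits an Eulerian circuit iff it is weakly connected and every vertex satisfies $d^{+}(v)=d^{-}(v)$, and it admits an Eulerian path that is not a circuit iff it is weakly connected, exactly one vertex $u$ has $d^{+}(u)=d^{-}(u)+1$, exactly one vertex $w$ has $d^{-}(w)=d^{+}(w)+1$, and every other vertex is balanced (the path then running from $u$ to $w$). The sufficiency of weak connectivity in the balanced case rests on the standard fact that a weakly connected directed multigraph with all vertices balanced is in fact strongly connected, which follows by summing the degree equalities over a would-be source component of the condensation DAG; the path case reduces to the circuit case by temporarily adding the arc $(w,u)$. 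Both conditions are testable, and when they hold an actual Eulerian path is produced, by one connectivity check plus a single run of Hierholzer's algorithm. Since $|A|=n$ and $|V|\le 2n$ for the non-isolated vertices, the whole procedure runs in $O(n)$ time, so $OSPD$ is polynomially — indeed linearly — solvable.

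The only point requiring care is the connectivity hypothesis: it must be weak connectivity measured on the set of non-isolated vertices only (a disconnected arc set is perfectly possible even though no isolated symbols arise), and one must not mistakenly require strong connectivity as an assumption — strong connectivity is a \emph{consequence} of the balanced-degree condition, not a precondition. Beyond that the argument is entirely routine, so I expect no genuine obstacle; the substance of the lemma is the translation to directed Eulerian paths together with the observation that the resulting graph has only $O(n)$ arcs, so the classical linear-time Eulerian machinery already delivers the efficiency that will later let the $F2 | no-idle, no-wait | C_{\max}$ algorithm run in $O(n)$ via Lemmata \ref{l0} and \ref{l1}.
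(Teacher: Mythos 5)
Your proposal is correct and follows exactly the paper's approach: model each oriented domino as an arc of a directed multigraph on the symbols and reduce $OSPD$ to finding a directed Eulerian path, which is solvable in linear time. The paper simply cites the classical Eulerian-path machinery where you spell out the degree/connectivity characterization, so the two arguments are the same in substance.
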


\begin{proof}
Following the proof for problem $SPD$ in \cite{dmw14}, we construct a multigraph $G$ where vertices 
correspond to numbers and arcs (instead of edges) correspond to oriented dominoes. The rest works similarly.
Hence, in this case, the problem reduces to finding an Eulerian directed path in a directed multigraph
which in turn is polynomially solvable \cite{FH91}.
%A connected digraph G contains
%a directed eulerian trail if and only if G
%contains vertices u and v such that out-deg(u) = in-deg(u) + 1,
%out-deg(v) = in-deg(u)-1
%and out-deg(x) = in-deg(x) for all other vertices x of G.
%The trail begins at u and ends at v
\end{proof}

\noindent
The following proposition holds.

\begin{prop}\label{P2}
$F2 | no-idle, no-wait | C_{\max}$ polynomially reduces to $OSPD$.
%$F2 | no-idle, no-wait | C_{\max}$ $\propto$ $OSPD$.
%Correspondingly, problem $F2 | no-idle, no-wait | C_{\max}$ is polynomially solvable.
\end{prop}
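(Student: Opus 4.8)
The plan is to turn each job into an oriented domino, check that the no-idle/no-wait feasibility of a job sequence coincides \emph{exactly} with the chain condition of the corresponding $OSPD$ instance, and then recover an optimal-makespan sequence from any $OSPD$ solution by a cheap post-processing step justified by Lemmata~\ref{l0} and \ref{l1}.

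First I would build, from an instance of $F2 \mid no\text{-}idle, no\text{-}wait \mid C_{\max}$ with jobs $J_1,\dots,J_n$, an $OSPD$ instance consisting of $n$ oriented dominoes, one per job: the domino $d_j$ carries the value $p_{1,j}$ on its left end and $p_{2,j}$ on its right end, with orientation $p_{1,j}\rightarrow p_{2,j}$. This takes $O(n)$ time. The crucial remark is that, for a domino ordering $d_{[1]},d_{[2]},\dots,d_{[n]}$, the matching requirement ``right value of $d_{[j]}$ equals left value of $d_{[j+1]}$'' is precisely $p_{2,[j]}=p_{1,[j+1]}$ for $j=1,\dots,n-1$, that is, condition~(\ref{myeq}); and (\ref{myeq}) is not only necessary but also sufficient for $\sigma=(J_{[1]},\dots,J_{[n]})$ to be a feasible no-idle/no-wait schedule, since once the jobs are packed back to back on $M_1$ the no-wait constraint forces each job to start on $M_2$ exactly when its $M_1$-operation ends, which by (\ref{myeq}) is exactly when the previous job leaves $M_2$, so no idle time appears on $M_2$ either. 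Hence feasible no-idle/no-wait sequences are in one-to-one correspondence with the chains solving the $OSPD$ instance; in particular the flow shop instance is feasible if and only if the $OSPD$ instance is solvable, and otherwise the reduction simply reports infeasibility.

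It remains to make the reduction respect the makespan objective, not merely feasibility. By Lemma~\ref{l1} the makespan of any feasible $\sigma$ equals $p_{1,[1]_{\sigma}}+p(B)$, and $p(B)$ is a constant of the instance, so an optimal schedule is one minimizing the $M_1$-processing time of the first job. Here I would invoke Lemma~\ref{l0}: in the indexing provided by condition (C1), if $p_{1,1}\neq p_{2,n}$ (Case~1 of Lemma~\ref{l0}) every feasible sequence is forced to start with a job of $M_1$-time $p_{1,1}$, so \emph{any} chain returned by an $OSPD$ solver already yields an optimal schedule; if $p_{1,1}=p_{2,n}$ (Case~2 of Lemma~\ref{l0}) the chain closes into a cycle and can be cyclically rotated to begin with any value occurring as some $p_{1,j}$, so rotating it to start with the smallest such value attains the optimum. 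Building the dominoes, testing feasibility and performing the rotation are all $O(n)$, so the whole reduction is polynomial. The only genuinely delicate point is this last step: the combinatorial bijection with $OSPD$ is immediate from (\ref{myeq}), whereas carrying the optimization through the reduction relies precisely on Lemmata~\ref{l0} and \ref{l1}, which turn ``find a minimum-makespan sequence'' into ``pick the right starting job'' once an $OSPD$ solution is in hand.
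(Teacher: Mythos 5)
Your proposal is correct and follows essentially the same route as the paper: one oriented domino $p_{1,j}\rightarrow p_{2,j}$ per job, the observation that the chain-matching condition is exactly condition~(\ref{myeq}) so feasible sequences and domino chains correspond bijectively, and then Lemmata~\ref{l0} and~\ref{l1} to conclude that any chain is already optimal in Case~1 and that a cyclic rotation to the smallest $p_{1,k}$ yields the optimum in Case~2. The only difference is that you spell out the sufficiency of (\ref{myeq}) for feasibility, which the paper takes for granted.
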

\begin{proof}
By generating for each job $J_j$ a related domino $\{p_{1,j},p_{2,j}\}$, we know that any complete sequence of oriented dominoes in $OSPD$
corresponds to a feasible sequence for $F2 | no-idle, no-wait | C_{\max}$ and viceversa. 
But then, due to Lemma \ref{l0}, the jobs processing times either respect case 1 or case 2 of condition (C2).
In case 1, the related sequence is optimal for $F2 | no-idle, no-wait | C_{\max}$
as the processing time of the first job on the first machine is given and correspondingly, due to Lemma \ref{l1},
the makespan is given.
In case 2, we simply rotate the sequence in order to start with a job $J_k$ having the smallest processing time 
on the first machine, that is job $J_k$ with $p_{1,k} = \min_{i=1,..,n} p_{1,i}$. Correspondingly, the makespan $C_{\max}= p_{1,k}+\sum_{i=1}^n p_{2,i}$ is 
minimum and the related solution is optimal.
\end{proof}

\bigskip

%{\bf Introduce here some comments on reference \ref{hjm12}}

%%%%% Begin Zone VTK
%Starting from the reduction stated in Proposition \ref{P2}, we propose an optimal algorithm, referred to as \texttt{AlgF2} 
%(Figure \ref{refAlgF2}), which solves problem $F2 | no-idle, no-wait | C_{\max}$ in linear time.
By means of the reduction in Proposition \ref{P2}, we propose an optimal algorithm, referred to as \texttt{AlgF2}, which solves problem $F2 | no-idle, no-wait | C_{\max}$ in linear time.

\begin{algorithm}[htp]
\SetAlgoRefName{AlgF2}
\SetAlgoLined
\KwData{A set of $n$ jobs to be scheduled with processing times $p_{1,i}$ and $p_{2,i}$, the associated multigraph $G$}
\KwResult{An optimal no-wait no-idle schedule, if it exists}
${\cal V}=\{p_{1,i}, p_{2,i} / i=1,...,n\}$\;
$d^+(\alpha_i)=|\{j / p_{1,j}=\alpha_i\}|$, $\forall \alpha_i\in{\cal V}$ \;
$d^-(\alpha_i)=|\{j / p_{2,j}=\alpha_i\}|$, $\forall \alpha_i\in{\cal V}$ \;             
$d(\alpha_i)=d^+(\alpha_i)-d^-(\alpha_i)$, $\forall \alpha_i\in{\cal V}$ \;
${\cal S}=\{\alpha_i / d(\alpha_i)\neq 0\}$ \;
\eIf{($|\cal S|$=0)}
	{Compute an Eulerian walk $ew$ in $G$ starting from vertex $\alpha_i=\min_{\alpha_k \in {\cal V}}(\alpha_k)$\;}
	{\eIf{($|{\cal S}|=2$ and ($d(\alpha_1)=1$ and $d(\alpha_2)=-1, \alpha_1, \alpha_2\in {\cal S}$))}
	    {Compute an Eulerian walk $ew$ in $G$ from vertex $\alpha_1$ to vertex $\alpha_2$\;}
			{Exit: Problem infeasible\;}
	}
$s^*$ is the optimal schedule, with $s^*[k]$ the job corresponding to arc $ew[k]$, $\forall k=1,...,n$\;
\Return{$s^*$}
\caption{Solving the $F2 | no-idle, no-wait | C_{\max}$ problem}
\label{refAlgF2}
\end{algorithm}

\begin{prop}\label{refpropAlgF2}
Algorithm \texttt{AlgF2} solves problem $F2 | no-idle, no-wait | C_{\max}$ in $O(n)$ time. 
\end{prop}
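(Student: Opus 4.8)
The plan is to establish that each line of \texttt{AlgF2} runs in $O(n)$ time, and that the algorithm correctly returns an optimal schedule whenever one exists (and correctly declares infeasibility otherwise). Correctness follows from the results already proved: by Proposition \ref{P2}, a feasible schedule corresponds to a complete chain of the oriented dominoes $\{p_{1,j},p_{2,j}\}$, which by Lemma \ref{ospd} is exactly a directed Eulerian path in the multigraph $G$ whose vertices are the distinct processing-time values $\mathcal V$ and whose $n$ arcs are the jobs, each oriented from $p_{1,j}$ to $p_{2,j}$. First I would recall the classical characterization: a connected (on the arc-induced subgraph) directed multigraph has a directed Eulerian circuit iff $d^+(\alpha)=d^-(\alpha)$ for every vertex, and a directed Eulerian path (non-closed) iff exactly one vertex has $d^+-d^-=+1$, exactly one has $d^+-d^-=-1$, and all others are balanced. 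This is precisely the case split on $|\mathcal S|$ in the algorithm, so I would argue that the three branches (Eulerian circuit, Eulerian path between the unbalanced vertices, infeasible) exhaust all possibilities and match Lemma \ref{l0}: the balanced case is Case 2 of (C2), where any rotation is feasible, and the unbalanced case with $d(\alpha_1)=1,d(\alpha_2)=-1$ is Case 1, where the start/end values are forced.

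Next I would argue optimality of the returned schedule via Lemma \ref{l1}: the makespan equals $p_{1,[1]_\sigma}+p(B)$, and since $p(B)$ is sequence-independent, minimizing the makespan amounts to choosing a feasible sequence whose first job has the smallest possible processing time on $M_1$. In the unbalanced case the first vertex of every Eulerian path is forced to be $\alpha_1$, so any Eulerian path is optimal. In the balanced case the algorithm starts the Eulerian walk at $\alpha_i=\min_{\alpha_k\in\mathcal V}\alpha_k$; since in Case 2 the chain is a cycle and can be rotated to begin at any vertex (equivalently, at any job whose $M_1$-time equals that vertex value), starting at the minimum value yields $p_{1,[1]_\sigma}=\min_i p_{1,i}$, which is exactly the bound in Proposition \ref{P2}. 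Hence the output is optimal.

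For the complexity bound I would go line by line. Building $\mathcal V$ and the in/out-degree counts $d^+,d^-,d$ takes $O(n)$ time, using a hash table (or, since processing times are integers, bucket/radix techniques) so that the number of distinct values $|\mathcal V|\le 2n$ is handled in linear time; the set $\mathcal S$ and the tests $|\mathcal S|\in\{0,2\}$ and $d(\alpha_1)=1,d(\alpha_2)=-1$ are then $O(n)$. The one non-trivial point is computing the Eulerian walk: Hierholzer's algorithm constructs a directed Eulerian path or circuit in time linear in the number of arcs, i.e.\ $O(n)$, provided we also verify connectivity of the arc-induced subgraph, which a single traversal during Hierholzer's run detects (if the walk returns before exhausting all $n$ arcs, the instance is infeasible). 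Finally, reading off $s^*[k]$ as the job labelling arc $ew[k]$ is $O(n)$. Summing gives $O(n)$.

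The main obstacle I anticipate is the care needed in the infeasibility analysis and the connectivity check: Lemma \ref{l0} only gives the necessary degree/permutation condition (C1)--(C2), but an instance can satisfy the degree balance and still be infeasible because the multigraph $G$ is disconnected (two separate ``sub-chains'' of dominoes that cannot be joined). I would therefore emphasize that the Eulerian-walk computation must be the one that both tests connectivity and produces the walk in $O(n)$ — and that when it fails to traverse all $n$ arcs, the algorithm must fall through to ``Problem infeasible'', so the ``Exit'' line in \texttt{AlgF2} should be read as covering this case as well. A secondary subtlety is justifying that arithmetic on the processing times and the associated bookkeeping are unit-cost (or at least $O(n)$ overall) under the standard model; since the processing times are given as part of the input of size $\Theta(n)$ and we only compare and sum them a linear number of times, this is routine.
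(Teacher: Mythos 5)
Your proof is correct and follows essentially the same route as the paper's: reduce feasibility to a directed Eulerian path in the multigraph $G$, use the degree conditions to branch as in Lemma \ref{l0}, invoke Lemma \ref{l1} for optimality of the chosen starting vertex, and compute the walk in $O(n)$ time. You are in fact more explicit than the published proof on two points it leaves implicit --- that degree balance alone does not guarantee feasibility when $G$ is disconnected (so the Eulerian-walk computation must double as a connectivity check), and that in the balanced case starting at $\min_{\alpha_k\in\mathcal{V}}\alpha_k$ is what certifies optimality.
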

\begin{proof}
Algorithm \ref{refAlgF2} takes upon entry the set of $n$ jobs and its associated multigraph $G$ built as described in Lemma \ref{ospd}: each job processing time $p_{1,i}$ or $p_{2,i}$ corresponds to a vertex and there exists an arc from vertex $k$ to vertex $\ell$ iff there exists a job $J_i$ such that $p_{1,i}=k$ and $p_{2,i}=\ell$. Then, computing a feasible schedule to the $F2 | no-idle, no-wait | C_{\max}$ problem reduces to computing, if it exists,  a directed Eulerian path in $G$. The computation of an optimal solution is done by Algorithm \ref{refAlgF2} directly by exploiting existence conditions of such a path.\\
Concerning the running time of the algorithm, it can be noticed that lines 1-5 can be executed in $O(n)$ time by an appropriate implementation. Lines 7 and 10 require to compute an Eulerian path in an oriented graph with $n$ arcs, which can be done in $O(n)$ time 
(\cite{FH91}). 
%By the way, the overall algorithm requires $O(n)$ time and polynomial space.\\
Notice that the construction of graph $G$, needed as an input to the algorithm, can also be built in $O(n)$ time.
\end{proof}
%%%%% End Zone VTK

\noindent
Consider the $9$-job example of Table \ref{tab1}.

\begin{table*}[htbp]
	\centering
	\footnotesize
		\begin{tabular}{||c||c|c|c|c|c|c|c|c|c||}
			\hline
	   	 $i$  &  $J_1$   &  $J_2$   &  $J_3$   &  $J_4$   &  $J_5$   &  $J_6$   &  $J_7$   &  $J_8$   &  $J_9$ \\ \hline
 $p_{1,i}$  &   $5$    &  $3$     &  $4$     &  $6$     &  $1$     &  $5$     &  $3$     &  $2$     &  $4$   \\ \hline
 $p_{2,i}$  &   $3$    &  $4$     &  $6$     &  $1$     &  $5$     &  $3$     &  $2$     &  $4$     &  $5$   \\ \hline
 	\end{tabular}
%		\begin{tabular}{||c||c|c||}
%			\hline
%		$i$  &   $p_{1,i}$  &  $p_{2,i}$ \\ \hline
% $J_1$   &   $5$        &  $3$ \\ \hline
% $J_2$   &   $3$        &  $4$ \\ \hline
% $J_3$   &   $4$        &  $6$ \\ \hline
% $J_4$   &   $6$        &  $1$ \\ \hline
% $J_5$   &   $1$        &  $5$ \\ \hline
% $J_6$   &   $5$        &  $3$ \\ \hline
% $J_7$   &   $3$        &  $2$ \\ \hline
% $J_8$   &   $2$        &  $4$ \\ \hline
% $J_9$   &   $4$        &  $5$ \\ \hline
%		\end{tabular}
	\caption{A 9-job instance of problem $F2 | no-idle, no-wait | C_{\max}$}
	\label{tab1}
\end{table*}

\noindent
The corresponding optimal solution is provided in 
Figure \ref{f2_picture}.

%\vspace*{-.8cm}
%\begin{figure*}[!ht]
%    \begin{center}
%        \includegraphics[height=12.5cm,clip=true]{picture_f2sol.pdf}
%\vspace*{-6.8cm}				
%        \caption[Fig]{The optimal solution of the problem in Table \ref{tab1}}
%        \label{f2_picture}
%    \end{center}
%\end{figure*}

\begin{figure*}[!ht]
    \begin{center}
        \ifx\JPicScale\undefined\def\JPicScale{1}\fi
\unitlength \JPicScale mm
\begin{picture}(113,25)(0,0)
\linethickness{0.3mm}
\put(4,6){\line(1,0){104}}
\linethickness{0.3mm}
\put(4,5){\line(0,1){1}}
\linethickness{0.3mm}
\put(19,5){\line(0,1){1}}
\put(-0.25,17){\makebox(0,0)[cc]{$M_1$}}

\put(-0.25,8){\makebox(0,0)[cc]{$M_2$}}

\put(34,3.38){\makebox(0,0)[cc]{10}}

\put(106,3.38){\makebox(0,0)[cc]{34}}

\linethickness{0.3mm}
\put(106,5){\line(0,1){3}}
\put(43,17){\makebox(0,0)[cc]{$P_{1,9}$}}

\put(34,17){\makebox(0,0)[cc]{$P_{1,8}$}}

\put(26,17){\makebox(0,0)[cc]{$P_{1,7}$}}

\put(15,17){\makebox(0,0)[cc]{$P_{1,6}$}}

\put(5,25){\makebox(0,0)[cc]{$P_{1,5}$}}

\put(94,17){\makebox(0,0)[cc]{$P_{1,4}$}}

\put(79,17){\makebox(0,0)[cc]{$P_{1,3}$}}

\put(69,17){\makebox(0,0)[cc]{$P_{1,2}$}}

\put(57,17){\makebox(0,0)[cc]{$P_{1,1}$}}

\linethickness{0.3mm}
\put(5,20){\line(0,1){3}}
\put(5,20){\vector(0,-1){0.12}}
\put(57,8){\makebox(0,0)[cc]{$P_{2,9}$}}

\put(27,8){\makebox(0,0)[cc]{$P_{2,6}$}}

\put(43,8){\makebox(0,0)[cc]{$P_{2,8}$}}

\put(34,8){\makebox(0,0)[cc]{$P_{2,7}$}}

\put(15,8){\makebox(0,0)[cc]{$P_{2,5}$}}

\put(113,16){\makebox(0,0)[cc]{$P_{2,4}$}}

\put(68,8){\makebox(0,0)[cc]{$P_{2,1}$}}

\put(79,8){\makebox(0,0)[cc]{$P_{2,2}$}}

\put(94,8){\makebox(0,0)[cc]{$P_{2,3}$}}

\linethickness{0.3mm}
\multiput(105,10)(0.15,0.12){33}{\line(1,0){0.15}}
\put(105,10){\vector(-4,-3){0.12}}
\linethickness{0.3mm}
\put(7,11){\line(1,0){15}}
\put(7,6){\line(0,1){5}}
\put(22,6){\line(0,1){5}}
\put(7,6){\line(1,0){15}}
\linethickness{0.3mm}
\put(22,11){\line(1,0){9}}
\put(22,6){\line(0,1){5}}
\put(31,6){\line(0,1){5}}
\put(22,6){\line(1,0){9}}
\linethickness{0.3mm}
\put(31,11){\line(1,0){6}}
\put(31,6){\line(0,1){5}}
\put(37,6){\line(0,1){5}}
\put(31,6){\line(1,0){6}}
\linethickness{0.3mm}
\put(37,11){\line(1,0){12}}
\put(37,6){\line(0,1){5}}
\put(49,6){\line(0,1){5}}
\put(37,6){\line(1,0){12}}
\linethickness{0.3mm}
\put(49,11){\line(1,0){15}}
\put(49,6){\line(0,1){5}}
\put(64,6){\line(0,1){5}}
\put(49,6){\line(1,0){15}}
\linethickness{0.3mm}
\put(64,11){\line(1,0){9}}
\put(64,6){\line(0,1){5}}
\put(73,6){\line(0,1){5}}
\put(64,6){\line(1,0){9}}
\linethickness{0.3mm}
\put(73,11){\line(1,0){12}}
\put(73,6){\line(0,1){5}}
\put(85,6){\line(0,1){5}}
\put(73,6){\line(1,0){12}}
\linethickness{0.3mm}
\put(85,11){\line(1,0){18}}
\put(85,6){\line(0,1){5}}
\put(103,6){\line(0,1){5}}
\put(85,6){\line(1,0){18}}
\put(19,3.38){\makebox(0,0)[cc]{5}}

\linethickness{0.3mm}
\put(34,5){\line(0,1){1}}
\linethickness{0.3mm}
\put(49,5){\line(0,1){1}}
\put(49,3.38){\makebox(0,0)[cc]{15}}

\linethickness{0.3mm}
\put(64,5){\line(0,1){1}}
\put(64,3.38){\makebox(0,0)[cc]{20}}

\put(79,3.38){\makebox(0,0)[cc]{25}}

\linethickness{0.3mm}
\put(79,5){\line(0,1){1}}
\linethickness{0.3mm}
\put(94,5){\line(0,1){1}}
\put(94,3.38){\makebox(0,0)[cc]{30}}

\linethickness{0.3mm}
\put(103,11){\line(1,0){3}}
\put(103,6){\line(0,1){5}}
\put(106,6){\line(0,1){5}}
\put(103,6){\line(1,0){3}}
\linethickness{0.3mm}
\put(4,20){\line(1,0){3}}
\put(4,15){\line(0,1){5}}
\put(7,15){\line(0,1){5}}
\put(4,15){\line(1,0){3}}
\linethickness{0.3mm}
\put(7,20){\line(1,0){15}}
\put(7,15){\line(0,1){5}}
\put(22,15){\line(0,1){5}}
\put(7,15){\line(1,0){15}}
\linethickness{0.3mm}
\put(22,20){\line(1,0){9}}
\put(22,15){\line(0,1){5}}
\put(31,15){\line(0,1){5}}
\put(22,15){\line(1,0){9}}
\linethickness{0.3mm}
\put(31,20){\line(1,0){6}}
\put(31,15){\line(0,1){5}}
\put(37,15){\line(0,1){5}}
\put(31,15){\line(1,0){6}}
\linethickness{0.3mm}
\put(37,20){\line(1,0){12}}
\put(37,15){\line(0,1){5}}
\put(49,15){\line(0,1){5}}
\put(37,15){\line(1,0){12}}
\linethickness{0.3mm}
\put(49,20){\line(1,0){15}}
\put(49,15){\line(0,1){5}}
\put(64,15){\line(0,1){5}}
\put(49,15){\line(1,0){15}}
\linethickness{0.3mm}
\put(64,20){\line(1,0){9}}
\put(64,15){\line(0,1){5}}
\put(73,15){\line(0,1){5}}
\put(64,15){\line(1,0){9}}
\linethickness{0.3mm}
\put(73,20){\line(1,0){12}}
\put(73,15){\line(0,1){5}}
\put(85,15){\line(0,1){5}}
\put(73,15){\line(1,0){12}}
\linethickness{0.3mm}
\put(85,20){\line(1,0){18}}
\put(85,15){\line(0,1){5}}
\put(103,15){\line(0,1){5}}
\put(85,15){\line(1,0){18}}
\put(7,0){\makebox(0,0)[cc]{$C_{\max}=34$}}

\put(7,0){\makebox(0,0)[cc]{}}

\end{picture}			
        \caption[Fig]{The optimal solution of the problem of Table \ref{tab1}}
        \label{f2_picture}
    \end{center}
\end{figure*}
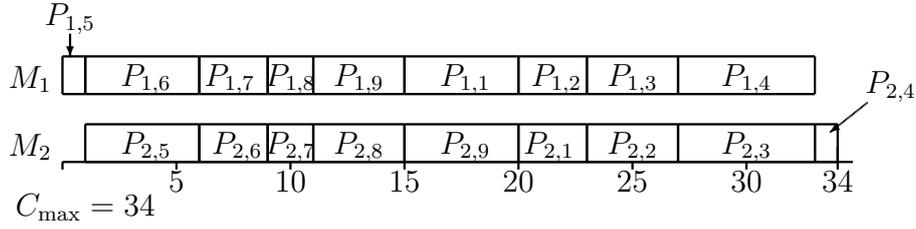

\bigskip

\noindent
The input data and the solution of the $OSPD$ problem corresponding to the $F2 | no-idle, no-wait | C_{\max}$ instance of Table \ref{tab1}
are provided in Figure \ref{ospdfig}.

\begin{figure*}[!ht]
    \begin{center}
        \includegraphics[height=7.5cm,clip=true]{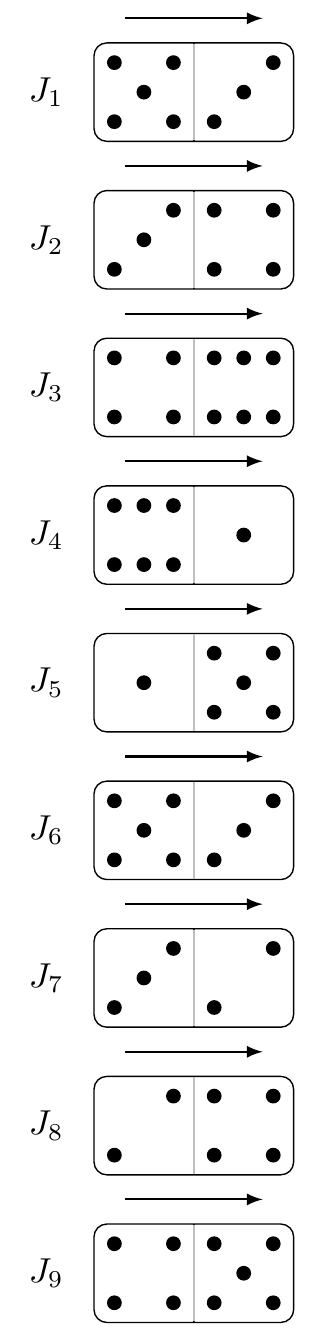} \hspace*{3cm}
        \includegraphics[height=4.5cm,clip=true]{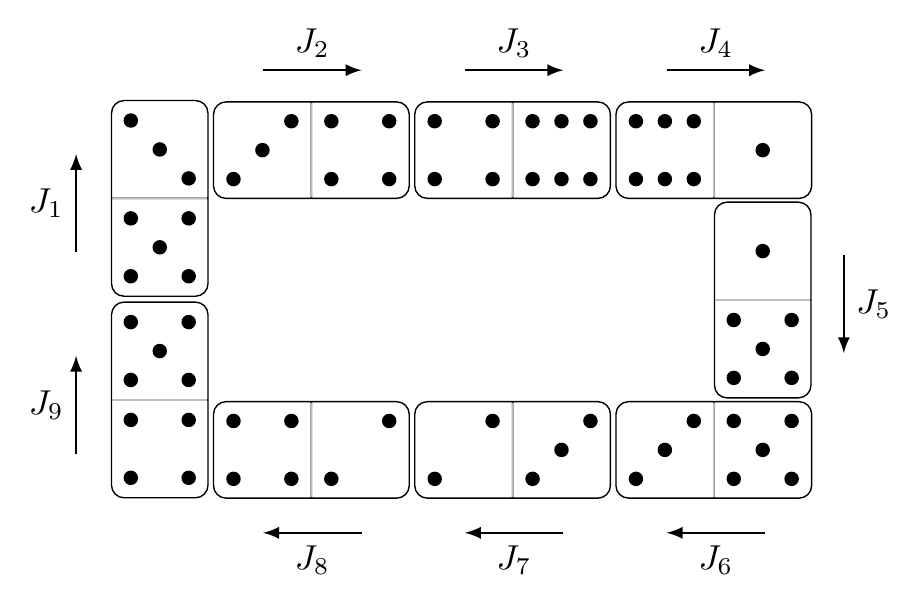}
\vspace*{-0.2cm}				
        \caption[Fig]{The dominoes corresponding to the flow shop instance of Table \ref{tab1} and
				the related $OSPD$ problem solution}
        \label{ospdfig}
    \end{center}
\end{figure*}

%\bigskip

%\noindent
%The solution of the $OSPD$ problem of Figure \ref{ospd} is depicted in Figure \ref{ospd1}.

%\begin{figure*}[!ht]
%    \begin{center}
%        \includegraphics[height=3.5cm,clip=true]{dominoNew.pdf}
%\vspace*{-0.2cm}				
%        \caption[Fig]{The solution of the $OSPD$ problem of Figure \ref{ospd}}
%        \label{ospd1}
%    \end{center}
%\end{figure*}

\bigskip

\noindent
The multigraph computed by Algorithm \ref{refAlgF2}  on the flow shop instance of Table \ref{tab1}
is depicted in Figure \ref{euler}.

\begin{figure*}[!ht]
    \begin{center}
        \includegraphics[height=3.5cm,clip=true]{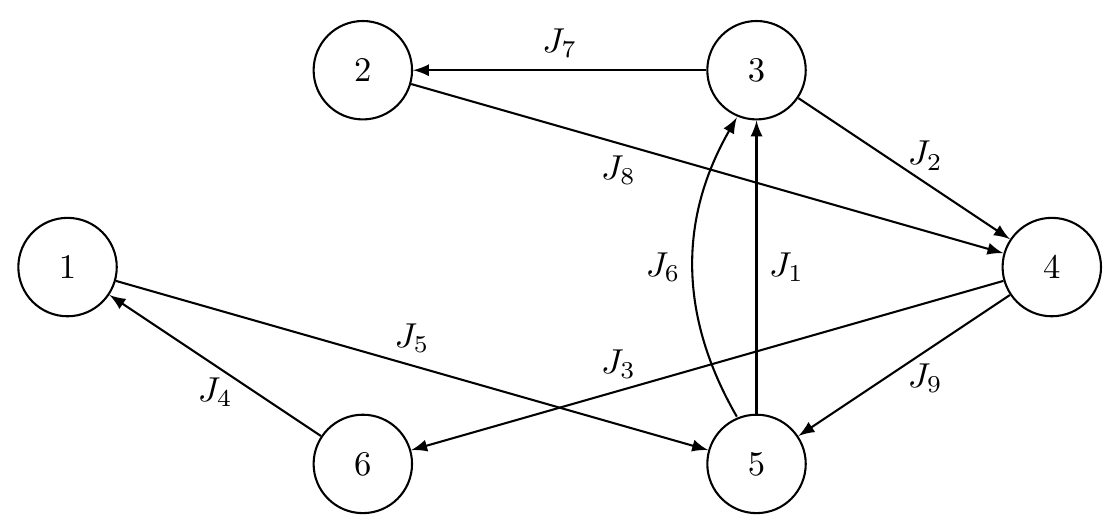}
\vspace*{-0.2cm}				
        \caption[Fig]{The multigraph originated by Algorithm \ref{refAlgF2} 
				on the flow shop instance of Table \ref{tab1}}
        \label{euler}
    \end{center}
\end{figure*}

%\bigskip
%\noindent

\subsection{The $F2 | no-idle, no-wait | C_{\max}$ problem and the Common/Distinct Successors Directed Hamiltonian Path problem.}

\noindent
Problem $F2 | no-idle, no-wait | C_{\max}$ is also linked to a special case of the Hamiltonian Path problem
on a connected digraph.
Consider a connected digraph $G(V,A)$ 
that has the following property:  
$\forall v_i,v_j \in V$, either 
$S_i\cap S_j = \emptyset$, or $S_i=S_j$,
where $S_i$ denotes the set of successors of vertex $v_i$.
In other words, each pair of vertices either has no common successors or has all successors in common. Let indicate the Hamiltonian path problem in that graph as the Common/Distinct Successors Directed Hamiltonian Path (CDSDHP) problem. 
Notice that problem CDSDHP may have no feasible solution as indicated in Figure 
\ref{dominotspnosol}.
 
\begin{figure*}[!ht]
    \begin{center}
        \includegraphics[height=4cm,clip=true]{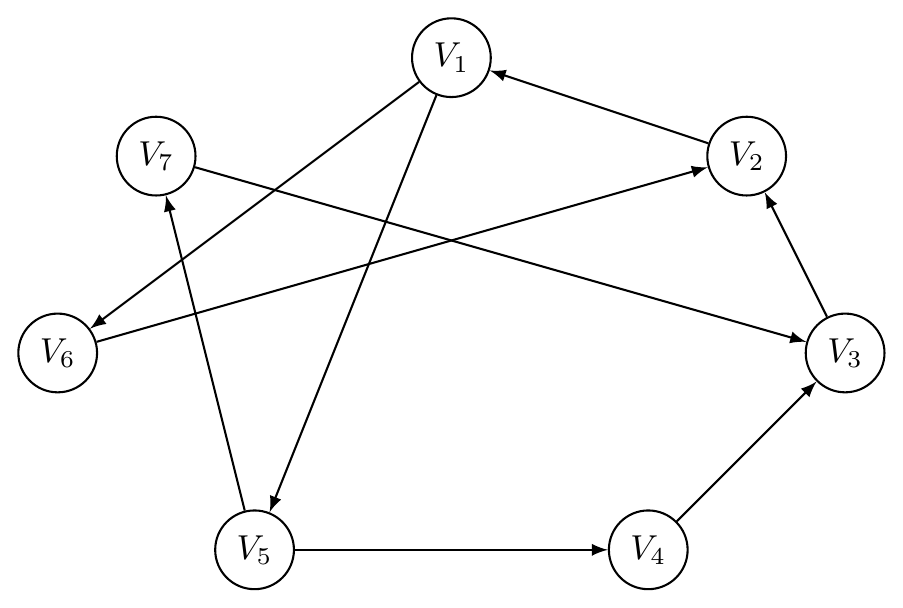}
\vspace*{-0.2cm}				
        \caption[Fig]{A non feasible instance of the CDSDHP problem}
        \label{dominotspnosol}
    \end{center}
\end{figure*}

\noindent
The following straightforward lemma holds.
\begin{Lemma}\label{cdshdp}
Problem OSPD polynomially reduces to problem CDSDHP.
\end{Lemma}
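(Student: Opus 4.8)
The plan is to build a digraph $G(V,A)$ directly from an instance of $OSPD$ and verify that it has the required common/distinct successors structure and that its Hamiltonian paths correspond to complete domino chains. Given an $OSPD$ instance with oriented dominoes $d_1,\dots,d_n$, where $d_k$ carries the ordered pair $(a_k,b_k)$ (meaning the orientation $a_k \rightarrow b_k$), I would let $V = \{d_1,\dots,d_n\}$ be one vertex per domino, and put an arc from $d_k$ to $d_\ell$ whenever $b_k = a_\ell$, i.e. whenever $d_\ell$ may legally follow $d_k$ in a chain. A Hamiltonian path in this digraph visits every domino exactly once in an order in which each consecutive pair matches, which is precisely a solution to $OSPD$; conversely any $OSPD$ solution yields such a path. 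This equivalence is immediate and is the easy half of the argument.

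The substantive point is that $G$ has the CDSDHP structure. I would argue as follows: the out-neighbourhood of a vertex $d_k$ depends on $d_k$ only through its right-hand symbol $b_k$, namely $S_k = \{\, d_\ell : a_\ell = b_k \,\}$. Hence for two vertices $d_k,d_m$, if $b_k = b_m$ then $S_k = S_m$, and if $b_k \neq b_m$ then $S_k$ and $S_m$ are the sets of dominoes whose left symbol equals two distinct values, so $S_k \cap S_m = \emptyset$. Thus every pair of vertices either shares all successors or shares none, which is exactly the defining property of the CDSDHP graph class. I should also confirm the reduction is polynomial: $G$ has $n$ vertices and $O(n^2)$ arcs and is constructed in $O(n^2)$ time by comparing symbols pairwise, so the transformation is polynomial, as claimed.

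One small subtlety to address for completeness is connectivity, since the statement of the CDSDHP problem is phrased for a \emph{connected} digraph. If $G$ as constructed is disconnected, then no Hamiltonian path can exist and the $OSPD$ instance is trivially infeasible; so one may either restrict attention to the connected case without loss of generality, or observe that the reduction simply maps disconnected instances to (obviously) ``no'' instances, which is acceptable for a polynomial reduction of a decision problem. The main — and only mild — obstacle is being careful that ``successors in common'' is read with respect to out-arcs and that the degenerate cases (a domino with $a_k=b_k$, or two identical dominoes) are covered by the dichotomy above; both are, since the argument only uses the value of the right symbol. With these checks the reduction is complete, and since $OSPD$ is polynomially solvable by Lemma~\ref{ospd}, this also furnishes the announced polynomially solvable special case of the Hamiltonian Path problem.
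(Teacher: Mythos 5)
Your construction (one vertex per oriented domino, an arc from $d_k$ to $d_\ell$ whenever the right symbol of $d_k$ equals the left symbol of $d_\ell$) is exactly the one the paper uses, and your correspondence between domino chains and Hamiltonian paths is the same argument. Your write-up is in fact more complete than the paper's, since you explicitly verify that the resulting digraph satisfies the common/distinct-successors property (via the observation that $S_k$ depends only on the right symbol $b_k$) and you handle connectivity and the degenerate cases, all of which the paper leaves implicit.
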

\begin{proof}
Given an OSPD problem with $n$ tiles, it is sufficient to generate a digraph $G(V,A)$ 
where each oriented tile corresponds to a vertex and there is an arc between two vertices 
if the corresponding tiles can match.
Then, any complete sequence of oriented dominoes in OSPD
corresponds to an hamiltonian path in the digraph and viceversa.
\end{proof}

\begin{figure*}[!ht]
    \begin{center}
        \includegraphics[height=3.5cm,clip=true]{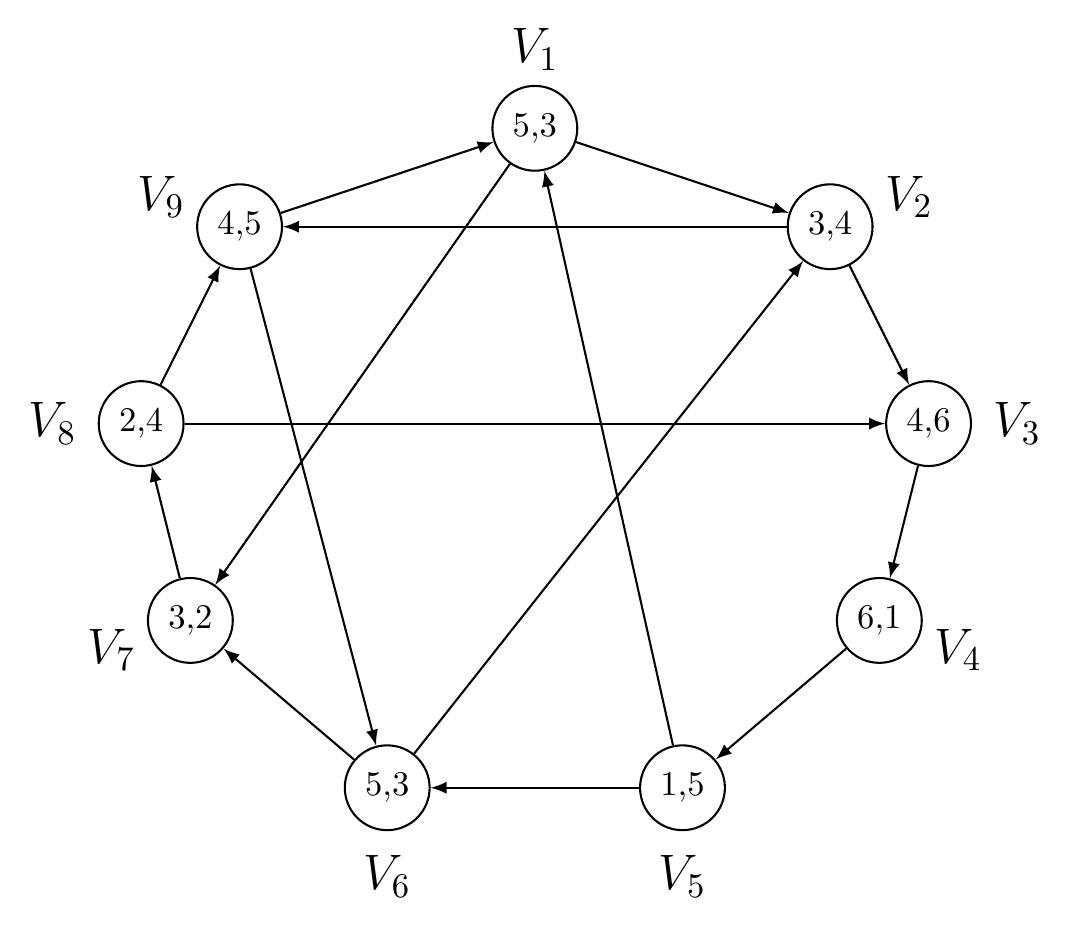}
\vspace*{-0.2cm}				
        \caption[Fig]{The CDSDHP instance corresponding to the OSPD problem of Figure \ref{ospdfig} and to the flow shop instance of Table \ref{tab1}}
        \label{dominotsp2}
    \end{center}
\end{figure*}

\noindent
From Property \ref{P2} and Lemma \ref{cdshdp}, we know that 
$F2 | no-idle, no-wait | C_{\max}$ $\propto$ OSPD $\propto$ CDSDHP.
Figure \ref{dominotsp2}
depicts the CDSDHP instance corresponding to the OSPD problem of Figure \ref{ospdfig}
and to the flow shop instance of Table \ref{tab1}. The polynomial reduction between these problems actually work also in the opposite sense 
as indicated by the following Proposition \ref{P4}. 

\noindent
For any given instance of problem CDSDHP, consider algorithm \texttt{AlgGenerF2}
which constructs a related instance of problem $F2 | no-idle, no-wait | C_{\max}$.

\begin{algorithm}[htp]
\SetAlgoRefName{AlgGenerF2}
\SetAlgoLined
\KwData{A digraph $G(V,A)$ such that $\forall v_i,v_j \in V$, either 
$S_i\cap S_j = \emptyset$, or $S_i=S_j$}
\KwResult{An instance of problem $F2 | no-idle, no-wait | C_{\max}$ with $n$ jobs where $n=|V|$}
$\forall v_j \in V$ generate a corresponding job $J_j$\;
$count=1$\;
\While{$\exists v_k: p_{1,k}$ or $p_{2,k}$ have not yet been determined} 
{\eIf{($v_k$ has a self-loop)}
	{$p_{1,k}=p_{2,k}=count$; $count=count+1$\;}
	{\eIf{(both $p_{1,k}$ and $p_{2,k}$ have not yet been determined)\;}
		{$p_{1,k}=count$; $p_{2,k}=count+1$; $count=count+2$\;} 
		{\eIf{($p_{1,k}$ has not yet been determined)\;}
			{$p_{1,k}=count$; $count=count+1$\;} 
				{\If{($p_{2,k}$ has not yet been determined)\;}
					{$p_{2,k}=count$; $count=count+1$\;} 				
					
    }
  }
}
$\forall v_j $ successor of $v_k$, $\; p_{1,j}=p_{2,k}$\; 
$\forall v_j $ predecessor of $v_k$, $\; p_{2,j}=p_{1,k}$\; 
$\forall v_j$ having common successors with $v_k$, $p_{2,j} =p_{2,k}$\;
$\forall v_j$ having common predecessors with $v_k$, $p_{1,j} =p_{1,k}$\;
}
\Return{}
\caption{Generating an instance of problem $F2 | no-idle, no-wait | C_{\max}$}
\label{refAlgGenerF2}
\end{algorithm}

\noindent
The following proposition holds.

\begin{prop}\label{P4}
CDSDHP polynomially reduces to problem $F2 | no-idle,no-wait | C_{\max}$.
%$CDSHDP$ $\propto$  $F2 | no-idle,no-wait | C_{\max}$.
%Correspondingly, problem $CDSHDP$  is polynomially solvable.
\end{prop}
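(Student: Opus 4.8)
The plan is to prove that Algorithm \texttt{AlgGenerF2}, applied to a CDSDHP instance $G(V,A)$, outputs an $F2 | no-idle, no-wait | C_{\max}$ instance whose feasible sequences are in one-to-one correspondence with the Hamiltonian paths of $G$. Solving that scheduling instance with Algorithm \texttt{AlgF2} then decides CDSDHP in polynomial time and, whenever it is feasible, the returned schedule is (the vertex order of) a Hamiltonian path of $G$.

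First I would unpack the structural hypothesis on $G$. Since for every pair $v_i,v_j$ either $S_i\cap S_j=\emptyset$ or $S_i=S_j$, the distinct non-empty successor sets $T_1,\dots,T_r$ are pairwise disjoint (two intersecting ones would have to coincide), so each vertex belongs to at most one $T_a$. This is exactly what makes the assignment performed by \texttt{AlgGenerF2} well defined: all vertices of a fixed $T_a$ receive the same first-machine time, a common ``colour'' $\beta_a$, which is also given as second-machine time to every vertex whose successor set equals $T_a$; a vertex of in-degree $0$ receives a fresh value as its $p_{1,\cdot}$, a sink receives a fresh value as its $p_{2,\cdot}$, and all colours and fresh values are pairwise distinct because of the running counter $count$. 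I would then check that the four propagation lines of \texttt{AlgGenerF2} (over successors, over predecessors, over vertices with a common successor, over vertices with a common predecessor) never assign conflicting values to the same $p_{1,k}$ or $p_{2,k}$ — precisely because the $T_a$'s are disjoint, a job belongs to at most one colour class on each side — and that the whole construction is polynomial, each vertex triggering at most one ``new colour'' step followed by a scan of its in- and out-neighbourhoods.

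Next I would verify that $G$ is exactly the oriented-domino matching digraph of the generated jobs, i.e. $(v_k,v_j)\in A$ if and only if $p_{2,k}=p_{1,j}$. Indeed, if $S_k=T_a\neq\emptyset$ then $p_{2,k}=\beta_a$, and since $\beta_a$ occurs as a first-machine time only on the vertices of $T_a$, the equality $p_{2,k}=p_{1,j}$ holds exactly for $v_j\in T_a=S_k$; if $S_k=\emptyset$ then $p_{2,k}$ is a fresh value equal to no $p_{1,j}$, matching the absence of out-arcs at $v_k$ (self-loops are covered by the same argument, as $p_{1,k}=p_{2,k}$ iff $v_k$ lies in its own successor set). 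Invoking Proposition~\ref{P2} together with Lemmata~\ref{ospd} and~\ref{cdshdp}, a feasible sequence $\sigma$ of the constructed instance is exactly a permutation of the jobs satisfying $p_{2,[j]_\sigma}=p_{1,[j+1]_\sigma}$ for all $j$, hence exactly a Hamiltonian path of that matching digraph, that is of $G$. Therefore $G$ admits a Hamiltonian path iff the constructed $F2 | no-idle, no-wait | C_{\max}$ instance is feasible; running \texttt{AlgF2} on it either reports infeasibility — so $G$ has no Hamiltonian path — or returns an optimal schedule from which a Hamiltonian path is read off directly. Since both the construction and \texttt{AlgF2} run in polynomial (the latter in linear) time, we conclude CDSDHP $\propto F2 | no-idle, no-wait | C_{\max}$.

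The hard part will be the consistency of \texttt{AlgGenerF2}: one has to argue carefully that the CDSDHP property is exactly the condition under which the four propagation rules are mutually compatible, so that no job is ever given two different processing times, and that the resulting matching digraph is $G$ itself rather than a proper supergraph — any spurious equality $p_{2,k}=p_{1,j}$ would force two distinct non-empty successor sets to share a vertex, which the hypothesis rules out. Everything else (feasibility $\Leftrightarrow$ Hamiltonicity, extraction of the path, polynomiality) then follows from the results already established in the excerpt.
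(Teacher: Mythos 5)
Your proposal is correct and follows essentially the same route as the paper: construct the scheduling instance via Algorithm \texttt{AlgGenerF2} and show that arcs of $G$ correspond exactly to the matchings $p_{2,i}=p_{1,j}$, so that feasible sequences and Hamiltonian paths coincide. You are in fact somewhat more careful than the paper, which asserts the arc--matching correspondence without explicitly verifying either the consistency of the propagation rules or the absence of spurious equalities; your observation that the distinct non-empty successor sets are pairwise disjoint is precisely the missing justification.
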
 
\begin{proof}
For any given instance of CDSDHP, we generate an instance of
$F2 | no-idle,no-wait | C_{\max}$ according to Algorithm \texttt{AlgGenerF2}.
Notice that  the jobs processing times 
are generated in such a way that, if there is an arc from $v_i$ to $v_j$,
then, we have $p_{2,i}=p_{1,j}$. 
If a feasible sequence of $F2 | no-idle,no-wait | C_{\max}$ exists,
then, for each pair of consecutive jobs $i,j$ with $i$ preceding $j$,
we have $p_{2,i}=p_{1,j}$ and, correspondingly, there is an arc from $v_i$ to $v_j$.
Thus, the corresponding sequence of vertices constitutes an Hamiltonian directed path
for the considered instance of CDSDHP.
Conversely, if an Hamiltonian directed path exists
for the considered instance of CDSDHP,
the corresponding sequence of jobs in problem $F2 | no-idle,no-wait | C_{\max}$ is also feasible.
\end{proof}

\begin{Remark}
Problem CDSDHP is solvable in $O(n^2)$ time.
Indeed, the {\em while}-loop in Algorithm \ref{refAlgGenerF2} is applied at most $O(n)$ times and the predecessors and successors
of any given $v_k$ are at most $O(n)$. Correspondingly, Algorithm \ref{refAlgGenerF2}
has $O(n^2)$ complexity, while, from Proposition {refpropAlgF2}, we know that problem 
$F2 | no-idle, no-wait | C_{\max}$ is solvable in linear time.
\end{Remark}

\section{$m$-machine no-idle no-wait flow shop scheduling}
\label{fmss}

%%%%% Begin Zone VTK
In this section we focus on the $m$-machine no-wait no-idle flowshop problem. Here 
each job $j$ has processing times $p_{i,j}$, for all machines $M_1, ..., M_m$.
Let us first introduce a generalized version of Lemma \ref{l0}
illustrated in Figure \ref{reffigFM1}.

%\begin{figure*}[!ht]
%    \begin{center}
%        \includegraphics[height=3.5cm,clip=true]{GanttFM.png}
%        \caption[Fig]{Solution of an $m$-machine flowshop problem}
%        \label{reffigFM1}
%    \end{center}
%\end{figure*}

\begin{figure*}[!ht]
    \begin{center}
        \ifx\JPicScale\undefined\def\JPicScale{1}\fi
\unitlength \JPicScale mm
\begin{picture}(127,48)(0,0)
\linethickness{0.3mm}
\put(8,34){\line(1,0){104}}
\put(2,46){\makebox(0,0)[cc]{$M_1$}}

\put(2,37){\makebox(0,0)[cc]{$M_2$}}

\linethickness{0.3mm}
\put(11,39){\line(1,0){15}}
\put(11,34){\line(0,1){5}}
\put(26,34){\line(0,1){5}}
\put(11,34){\line(1,0){15}}
\linethickness{0.3mm}
\put(26,39){\line(1,0){9}}
\put(26,34){\line(0,1){5}}
\put(35,34){\line(0,1){5}}
\put(26,34){\line(1,0){9}}
\linethickness{0.3mm}
\put(35,39){\line(1,0){6}}
\put(35,34){\line(0,1){5}}
\put(41,34){\line(0,1){5}}
\put(35,34){\line(1,0){6}}
\linethickness{0.3mm}
\put(53,39){\line(1,0){15}}
\put(53,34){\line(0,1){5}}
\put(68,34){\line(0,1){5}}
\put(53,34){\line(1,0){15}}
\linethickness{0.3mm}
\put(68,39){\line(1,0){9}}
\put(68,34){\line(0,1){5}}
\put(77,34){\line(0,1){5}}
\put(68,34){\line(1,0){9}}
\linethickness{0.3mm}
\put(8,48){\line(1,0){3}}
\put(8,43){\line(0,1){5}}
\put(11,43){\line(0,1){5}}
\put(8,43){\line(1,0){3}}
\linethickness{0.3mm}
\put(11,48){\line(1,0){15}}
\put(11,43){\line(0,1){5}}
\put(26,43){\line(0,1){5}}
\put(11,43){\line(1,0){15}}
\linethickness{0.3mm}
\put(26,48){\line(1,0){9}}
\put(26,43){\line(0,1){5}}
\put(35,43){\line(0,1){5}}
\put(26,43){\line(1,0){9}}
\linethickness{0.3mm}
\put(41,48){\line(1,0){12}}
\put(41,43){\line(0,1){5}}
\put(53,43){\line(0,1){5}}
\put(41,43){\line(1,0){12}}
\linethickness{0.3mm}
\put(53,48){\line(1,0){15}}
\put(53,43){\line(0,1){5}}
\put(68,43){\line(0,1){5}}
\put(53,43){\line(1,0){15}}
\put(11,28){\makebox(0,0)[cc]{}}

\linethickness{0.3mm}
\put(8,25){\line(1,0){104}}
\put(2,28){\makebox(0,0)[cc]{$M_3$}}

\linethickness{0.3mm}
\put(26,30){\line(1,0){9}}
\put(26,25){\line(0,1){5}}
\put(35,25){\line(0,1){5}}
\put(26,25){\line(1,0){9}}
\linethickness{0.3mm}
\put(35,30){\line(1,0){6}}
\put(35,25){\line(0,1){5}}
\put(41,25){\line(0,1){5}}
\put(35,25){\line(1,0){6}}
\linethickness{0.3mm}
\put(41,30){\line(1,0){12}}
\put(41,25){\line(0,1){5}}
\put(53,25){\line(0,1){5}}
\put(41,25){\line(1,0){12}}
\linethickness{0.3mm}
\put(68,30){\line(1,0){9}}
\put(68,25){\line(0,1){5}}
\put(77,25){\line(0,1){5}}
\put(68,25){\line(1,0){9}}
\linethickness{0.3mm}
\put(77,30){\line(1,0){12}}
\put(77,25){\line(0,1){5}}
\put(89,25){\line(0,1){5}}
\put(77,25){\line(1,0){12}}
\linethickness{0.3mm}
\put(8,9){\line(1,0){118}}
\put(2,12){\makebox(0,0)[cc]{$M_{m-1}$}}

\linethickness{0.3mm}
\put(41,14){\line(1,0){12}}
\put(41,9){\line(0,1){5}}
\put(53,9){\line(0,1){5}}
\put(41,9){\line(1,0){12}}
\linethickness{0.3mm}
\put(53,14){\line(1,0){15}}
\put(53,9){\line(0,1){5}}
\put(68,9){\line(0,1){5}}
\put(53,9){\line(1,0){15}}
\linethickness{0.3mm}
\put(68,14){\line(1,0){9}}
\put(68,9){\line(0,1){5}}
\put(77,9){\line(0,1){5}}
\put(68,9){\line(1,0){9}}
\linethickness{0.3mm}
\put(89,14){\line(1,0){18}}
\put(89,9){\line(0,1){5}}
\put(107,9){\line(0,1){5}}
\put(89,9){\line(1,0){18}}
\put(11,3){\makebox(0,0)[cc]{}}

\linethickness{0.3mm}
\put(8,0){\line(1,0){119}}
\put(2,3){\makebox(0,0)[cc]{$M_m$}}

\linethickness{0.3mm}
\put(53,5){\line(1,0){15}}
\put(53,0){\line(0,1){5}}
\put(68,0){\line(0,1){5}}
\put(53,0){\line(1,0){15}}
\linethickness{0.3mm}
\put(68,5){\line(1,0){9}}
\put(68,0){\line(0,1){5}}
\put(77,0){\line(0,1){5}}
\put(68,0){\line(1,0){9}}
\linethickness{0.3mm}
\put(77,5){\line(1,0){12}}
\put(77,0){\line(0,1){5}}
\put(89,0){\line(0,1){5}}
\put(77,0){\line(1,0){12}}
\linethickness{0.3mm}
\put(107,5){\line(1,0){10}}
\put(107,0){\line(0,1){5}}
\put(117,0){\line(0,1){5}}
\put(107,0){\line(1,0){10}}
\put(3,20){\makebox(0,0)[cc]{$\vdots$}}

\put(9,46){\makebox(0,0)[cc]{1}}

\put(18,46){\makebox(0,0)[cc]{2}}

\put(31,46){\makebox(0,0)[cc]{3}}

\put(47,46){\makebox(0,0)[cc]{$n-1$}}

\put(61,46){\makebox(0,0)[cc]{$n$}}

\put(18,37){\makebox(0,0)[cc]{1}}

\put(31,37){\makebox(0,0)[cc]{2}}

\put(38,37){\makebox(0,0)[cc]{3}}

\put(60,37){\makebox(0,0)[cc]{$n-1$}}

\put(73,37){\makebox(0,0)[cc]{$n$}}

\put(31,28){\makebox(0,0)[cc]{1}}

\put(38,28){\makebox(0,0)[cc]{2}}

\put(47,28){\makebox(0,0)[cc]{3}}

\put(72,28){\makebox(0,0)[cc]{$n-1$}}

\put(83,28){\makebox(0,0)[cc]{$n$}}

\put(47,12){\makebox(0,0)[cc]{1}}

\put(61,3){\makebox(0,0)[cc]{1}}

\put(61,12){\makebox(0,0)[cc]{2}}

\put(72,3){\makebox(0,0)[cc]{2}}

\put(72,12){\makebox(0,0)[cc]{3}}

\put(83,3){\makebox(0,0)[cc]{3}}

\linethickness{0.3mm}
\put(107,14){\line(1,0){10}}
\put(107,9){\line(0,1){5}}
\put(117,9){\line(0,1){5}}
\put(107,9){\line(1,0){10}}
\linethickness{0.3mm}
\put(117,5){\line(1,0){9}}
\put(117,0){\line(0,1){5}}
\put(126,0){\line(0,1){5}}
\put(117,0){\line(1,0){9}}
\put(99,12){\makebox(0,0)[cc]{$n-1$}}

\put(112,12){\makebox(0,0)[cc]{$n$}}

\put(112,3){\makebox(0,0)[cc]{$n-1$}}

\put(122,3){\makebox(0,0)[cc]{$n$}}

\put(98,3){\makebox(0,0)[cc]{$\ldots$}}

\put(83,12){\makebox(0,0)[cc]{$\ldots$}}

\put(60,28){\makebox(0,0)[cc]{$\ldots$}}

\put(47,37){\makebox(0,0)[cc]{$\ldots$}}

\put(38,46){\makebox(0,0)[cc]{$\ldots$}}

\end{picture}
        \caption[Fig]{Solution of an $m$-machine flowshop problem}
        \label{reffigFM1}
    \end{center}
\end{figure*}
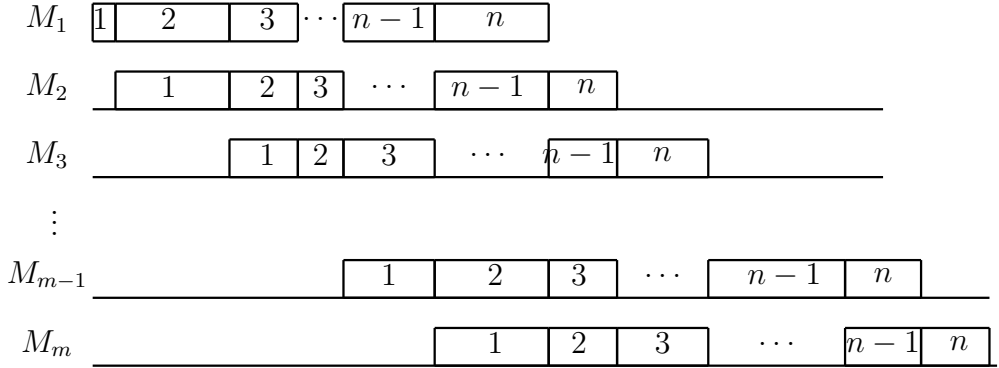

\begin{Lemma}\label{lFm1}
\noindent
\begin{enumerate}
\item[(C3)] A necessary condition to have a feasible solution for problem $F | no-idle, no-wait | C_{\max}$ is 
that there always exists an indexing of the jobs so that, $\forall j=1,...,m-1$, $p_{j+1,1},...,p_{j+1,n-1}$ and $p_{j,2},...,p_{j,n}$ constitute different permutations of the same vector of elements.
\item[(C4)] When the above condition (C3) holds, then
\begin{enumerate}
\item[Case 1] if $(p_{1,1}\neq p_{2,n}$ or $p_{2,1}\neq p_{3,n}$ or ... or $p_{m-1,1}\neq p_{m,n})$, every feasible sequence must have a job with processing times 
$(p_{1,1}, ..., p_{m-1,1})$ on machines $M_1$ to $M_{m-1}$ in first position and a job  with processing time $(p_{2,n}, ..., p_{m,n})$  on machines $M_2$ to $M_m$ in last position.
%Correspondingly the makespan is $a_1+p(B)$.
%\item[Case 2] if $(p_{1,1}=p_{2,n}$ and $p_{2,1}=p_{3,n}$ and ... and $p_{m-1,1}=p_{m,n})$ and there exists a feasible sequence,
\item[Case 2] if $(p_{1,1}=p_{2,n}$ and $p_{2,1}=p_{3,n}$ and ... and $p_{m-1,1}=p_{m,n})$
then there exists at least $n$ feasible sequences each starting with a different job by simply rotating the starting sequence
as in a cycle.
\end{enumerate}
\end{enumerate}
\end{Lemma}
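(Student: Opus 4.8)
The plan is to reduce the $m$-machine problem to the two-machine domino/Eulerian picture of Section~\ref{fss}, the only change being that the symbols matched at each adjacency become $(m-1)$-tuples of processing times instead of single numbers. I would begin by establishing the dovetailing identity that generalises~(\ref{myeq}): for every feasible sequence $\sigma$, every $j\in\{1,\dots,m-1\}$ and every $i\in\{1,\dots,n-1\}$,
\[
p_{j,[i+1]_{\sigma}}=p_{j+1,[i]_{\sigma}} .
\]
The derivation repeats the completion-time cascade of the proof of Lemma~\ref{l1}: no-wait makes the job in position $i$ start on $M_{j+1}$ exactly at its completion on $M_j$, while no-idle on $M_{j+1}$ makes it start there exactly at the completion of the job in position $i-1$ on $M_{j+1}$; hence the completion of position $i$ on $M_j$ equals the completion of position $i-1$ on $M_{j+1}$, and subtracting the same identity written for $i-1$ (and using no-idle on $M_j$ and on $M_{j+1}$ to turn completions into prefix sums of processing times) yields the displayed equality. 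The same computation also shows that no-wait forces a permutation schedule, so that speaking of ``the sequence'' is legitimate, and that the identity is in fact equivalent to feasibility; only the stated direction is needed here.

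Condition (C3) is then immediate: given any feasible $\sigma$, relabel the jobs so that the one in position $i$ of $\sigma$ becomes job $i$; the identity above now reads $p_{j,i+1}=p_{j+1,i}$ for all $j$ and all $i=1,\dots,n-1$, so $(p_{j,2},\dots,p_{j,n})$ and $(p_{j+1,1},\dots,p_{j+1,n-1})$ coincide and are in particular permutations of one common vector. Keeping this indexing, I would encode the instance as a single ``two-machine'' instance by attaching to job $i$ the head state $L_i=(p_{1,i},\dots,p_{m-1,i})$ and the tail state $R_i=(p_{2,i},\dots,p_{m,i})$. By the dovetailing identity, a permutation $\tau$ is feasible if and only if $L_{[i+1]_{\tau}}=R_{[i]_{\tau}}$ for $i=1,\dots,n-1$, i.e.\ if and only if the arcs $a_i\colon L_i\to R_i$, listed in the order $\tau$, form a directed Eulerian trail of the multidigraph $G'$ whose vertices are the occurring $(m-1)$-tuples. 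This is exactly the combinatorial object underlying $OSPD$ (Lemma~\ref{ospd}), so the degree count used in the proof of Lemma~\ref{l0} transfers verbatim: since (C3) gives $R_i=L_{i+1}$ for $i=1,\dots,n-1$, every vertex $u$ of $G'$ satisfies $\operatorname{outdeg}(u)-\operatorname{indeg}(u)=\mathbf{1}_{\{u=L_1\}}-\mathbf{1}_{\{u=R_n\}}$.

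Part (C4) follows by splitting on whether $L_1=R_n$, the key remark being that $L_1\neq R_n$ is \emph{exactly} the Case~1 hypothesis (some coordinate $j^{\ast}$ has $p_{j^{\ast},1}\neq p_{j^{\ast}+1,n}$) and $L_1=R_n$ is exactly the Case~2 hypothesis. In Case~1 the degree imbalance is $+1$ at $L_1$, $-1$ at $R_n$, and $0$ elsewhere, so every Eulerian trail of $G'$ starts at $L_1$ and ends at $R_n$; translating back, every feasible sequence begins with a job whose processing times on $M_1,\dots,M_{m-1}$ are $(p_{1,1},\dots,p_{m-1,1})$ and ends with a job whose processing times on $M_2,\dots,M_m$ are $(p_{2,n},\dots,p_{m,n})$. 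In Case~2 every vertex is balanced and $G'$ is connected (a feasible $\sigma$ already traverses all $n$ arcs as one trail), hence $G'$ has an Eulerian circuit; the $n$ cyclic rotations of $\sigma$ correspond to traversing this circuit starting from each of its arcs in turn, are therefore all feasible, and begin with pairwise distinct jobs, giving at least $n$ feasible sequences.

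The delicate point, and the one I would be most careful about, is Case~1: analysing the machine pairs $(M_j,M_{j+1})$ separately is not enough, because a pair with $p_{j,1}=p_{j+1,n}$ yields an Eulerian \emph{circuit} and does not on its own pin the endpoints, whereas the conclusion must hold for all machines at once. Working from the start with the tuple states $L_i,R_i$ is precisely what resolves this: the mere existence of one index $j^{\ast}$ with $p_{j^{\ast},1}\neq p_{j^{\ast}+1,n}$ already makes $L_1\neq R_n$, and this simultaneously fixes every coordinate of the first and the last job. The remaining item needing a little attention is the clean derivation of the dovetailing identity together with the observation that the schedule is forced to be a permutation schedule, but that is a routine lifting of the two-machine computation already performed.
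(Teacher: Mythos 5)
Your proof is correct and follows the same route the paper intends: the paper's own proof of this lemma is literally the one line ``Similar to that of Lemma \ref{l0}'', and your argument is the faithful generalization of that two-machine proof, with your dovetailing identity playing the role of expression (\ref{myeq}) and your tuple states $L_i,R_i$ coinciding with the vectors $[p_{1,i};\dots;p_{m-1,i}]$ and $[p_{2,i};\dots;p_{m,i}]$ that the paper itself introduces later in Proposition \ref{P5}. You also correctly isolate the one point where the generalization is not purely mechanical --- that Case 1 must be argued on the full $(m-1)$-tuples rather than machine pair by machine pair, since a single balanced pair would not pin down the endpoints --- a subtlety the paper's one-line proof leaves implicit.
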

\begin{proof}
Similar to that of Lemma \ref{l0}.
\end{proof}

From Lemma \ref{lFm1} and Figure \ref{reffigFM1} we can evince that in an optimal sequence if job $J_i$ immediately precedes job $J_k$ we must have $p_{j+1,i}=p_{j,k}, 
\forall j=1, ..., m-1$. Consequently, for a feasible subsequence $(J_\ell,J_i,J_k)$ we must have:
\begin{eqnarray}\label{myeq2}
[p_{2,\ell};...;p_{m,\ell}] = [p_{1,i};...;p_{m-1,i}] \text{ and } [p_{2,i};...;p_{m,i}] = [p_{1,k};...;p_{m-1,k}].
\end{eqnarray}
This can be represented in terms of dominoes as indicated in Figure \ref{reffigFM2}.

\begin{figure*}[!ht]
    \begin{center}
        \includegraphics[width=14cm]{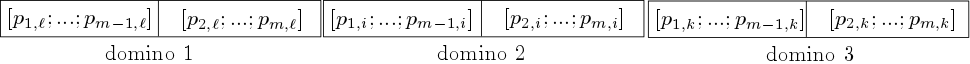}
        \caption[Fig]{Three consecutive jobs / dominos}
        \label{reffigFM2}
    \end{center}
\end{figure*}

\noindent The following proposition holds.

\begin{prop}\label{P5}
Problem $F | no-idle,no-wait | C_{\max}$ can be solved to optimality in 
%$O((m-1)n\log(n))$ time. 
$O(mn\log n)$ time.
\end{prop}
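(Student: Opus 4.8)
The plan is to generalize the two-machine approach: reduce $F \mid no\text{-}idle, no\text{-}wait \mid C_{\max}$ to a directed Eulerian path problem, where now the tokens are not dominoes with a single number at each end but ``super-dominoes'' carrying an $(m-1)$-tuple at each end. Concretely, from equation (\ref{myeq2}) I would associate to each job $J_i$ the ordered pair of $(m-1)$-vectors $\bigl(L_i, R_i\bigr)$ with $L_i = (p_{1,i},\dots,p_{m-1,i})$ and $R_i = (p_{2,i},\dots,p_{m,i})$; a sequence is feasible exactly when consecutive jobs satisfy $R_i = L_k$. Build a directed multigraph $G$ whose vertices are the distinct $(m-1)$-tuples appearing as some $L_i$ or $R_i$, with one arc $L_i \to R_i$ per job $J_i$. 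A feasible schedule then corresponds to a directed Eulerian path in $G$, exactly as in Lemma \ref{ospd} and Proposition \ref{P2}, and feasibility is decided by the standard degree conditions (connectivity of the underlying graph on arc-incident vertices, plus at most one vertex with $d^+ - d^- = 1$ and at most one with $d^- - d^+ = -1$).

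Next I would handle optimality via the $m$-machine analogue of Lemmata \ref{l0} and \ref{l1}, namely Lemma \ref{lFm1}: by the same telescoping argument as in Lemma \ref{l1}, the makespan of any feasible sequence $\sigma$ equals $\sum_{j=1}^{m-1} p_{j,[1]_\sigma} + \sum_{i=1}^n p_{m,i}$, so it depends only on the processing times of the first job on machines $M_1,\dots,M_{m-1}$. In Case 1 of Lemma \ref{lFm1} the first job's vector $(p_{1,1},\dots,p_{m-1,1})$ is forced, so any feasible Eulerian path is optimal; in Case 2 the graph is Eulerian (a closed walk), and we rotate the walk to start at the job minimizing $\sum_{j=1}^{m-1} p_{j,k}$ among all jobs $J_k$ whose left-tuple $L_k$ can legally open the cycle — equivalently, we pick the best starting arc on the Eulerian circuit, which gives the minimum makespan.

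For the complexity bound $O(mn\log n)$: there are $n$ arcs and at most $2n$ vertices, each vertex being an $(m-1)$-tuple of length $O(m)$. The dominating cost is identifying equal tuples so as to collapse them into the same vertex. Sorting the $2n$ tuples lexicographically costs $O(2n \log(2n))$ comparisons, each comparison costing $O(m)$, for $O(mn\log n)$ total; after sorting, assigning vertex labels, computing in/out degrees, checking the Eulerian conditions, and extracting the Eulerian path all run in $O(mn)$ time (the path traversal visits each arc once, writing out an $O(m)$-vector per step, or $O(n)$ if we only output the job sequence). Finally, selecting the optimal rotation in Case 2 is $O(n)$ once the circuit is in hand. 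Summing, the whole procedure is $O(mn\log n)$.

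The main obstacle is the bookkeeping in Case 2: one must be careful that ``rotating the Eulerian circuit'' genuinely yields another feasible schedule for every choice of starting vertex on the circuit — this is where the special structure guaranteed by Lemma \ref{lFm1}(C4, Case 2), i.e. $p_{j,1} = p_{j+1,n}$ for all $j$, is essential, and the argument should mirror the proof of Lemma \ref{l0} rather than be taken for granted. A secondary point worth stating carefully is that collapsing tuples into vertices is exactly what forces the $\log n$ factor (as opposed to the $O(n)$ of \texttt{AlgF2}), and that we cannot avoid it in general since, unlike single numbers which could in principle be radix-sorted within the input size, the $(m-1)$-tuples are compared as units.
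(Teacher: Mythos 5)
Your proposal is correct and takes essentially the same approach as the paper: the paper first relabels each $(m-1)$-tuple by its rank in lexicographic order (an $O(mn\log n)$ sort), producing an equivalent two-machine instance to which \texttt{AlgF2} is applied, and this composition is precisely your direct Eulerian-multigraph construction on tuple vertices. The treatment of Case 2 by rotating to the job minimizing $\sum_{j=1}^{m-1}p_{j,k}$ and the complexity accounting coincide with the paper's.
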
	
\begin{proof}
We show that any instance of the $F | no-idle,no-wait | C_{\max}$ problem can be reduced  to another instance of problem
$F2 | no-idle,no-wait | C_{\max}$. The reduction works as follows.
From any instance $I_m$ of the m-machine problem, build $2n$ vectors $[p_{1,i};...;p_{m-1,i}]$ and $[p_{2,i};...;p_{m,i}], $ for each job $J_i\in I_m$. Sort these vectors according 
to the lexicographical increasing order of their values and let $v_{[t]}$ be the $t$-th vector in this order. This process requires 
$O((m-1)n\log n) \approx O(mn\log n)$ time to be done.\\
Then, we create an instance $I_2$ of the 2-machine problem: $\forall J_i\in I_m$, we create a new job $J_\ell \in I_2$ such that $p_{1,\ell}=t_1$ and $p_{2,\ell}=t_2$ with 
$v_{[t_1]}=[p_{1,i};...;p_{m-1,i}]$ and $v_{[t_2]}=[p_{2,i};...;p_{m,i}]$. This process can be done in $O((m-1)n\log n) \approx O(mn\log n)$ time.
\smallskip

\noindent
Finally, apply Algorithm \ref{refAlgF2} to $I_2$. If instance $I_2$ is infeasible, then there does not exist a feasible solution to $I_m$ due to equation \ref{myeq2} which cannot be 
answered for any triplet of consecutive jobs. If instance $I_2$ admits a feasible solution, then a feasible solution to $I_m$ can be easily derived since jobs in $I_2$ correspond to jobs in
$I_m$. Let $s$ be the sequence of jobs of $I_m$ obtained from the optimal solution returned by Algorithm \ref{refAlgF2}  on $I_2$. If Case 1 of (C4) holds for $s$ then $s$ is also
optimal for $I_m$. Otherwise, it means that Case 2 holds and then the optimal solution $s^*$ to $I_m$ is obtained by considering the circular permutation of jobs in $s$ where job $J_\ell$ with 
$\ell=argmin_{k=1...n}(\sum_{j=1}^{m-1} p_{j,k})$ is put first. The building of $s^*$ can be done in $O((m-1)n) \approx O(mn)$ time.
Consequently, the solution of the $F | no-idle,no-wait | C_{\max}$ problem can be done with an overall $O(mn\log n)$ time complexity. 
%$O((m-1)n\log(n))$.

\end{proof}

%%%%% End Zone VTK

\section{Two-machine no-idle no-wait job shop scheduling}
\label{jss}
In this section we consider problem $J2 | no-idle, no-wait | C_{\max}$.
In the job shop configuration, differently from the flow shop,  each job
has its own processing routing. Also, when jobs with different processing routes are present, for every feasible solution, the jobs sequences on the two machines  are necessarily different  due to the no-wait constraint on the jobs.
We establish the complexity of this problem 
%$J2 | no-idle, no-wait | C_{\max}$ 
%and 
%$O2 | no-idle, no-wait | C_{\max}$ 
%by proving their 
%$NP$-Hardness in the strong sense. 
by proving that it is
$NP$-Hard in the strong sense. 
To this extent 
we consider the Numerical Matching with Target Sums (NMTS) problem
that has been shown to be $NP$-complete in the strong sense \cite{GJ82}.

\bigskip

\noindent
{\bf NMTS} \\
{\bf Instance}: Disjoint sets $X$ and $Y$, each containing $m$ elements, a size $s(k) \in Z^+$ for each element
$k \in X \cup Y$, and a target vector $< t_1,t_2,...,t_m >$ with positive integer entries
where $\sum_{i=1}^m t_i = \sum_{i=1}^m s(x_i)+\sum_{i=1}^m s(y_i)$. \\
{\bf Question}: Can $ X \cup Y$ be partitioned into $m$ disjoint sets $D_1,D_2,...,D_m$ each containing 
exactly one element from each of $X$ and $Y$ such that,
for $1 \leq i \leq m$, $\sum_{k \in D_i} s(k) = t_i$?

\begin{prop}\label{j2}
Problem $J2 | no-idle, no-wait | C_{\max}$ is $NP$-Hard in the strong sense.
\end{prop}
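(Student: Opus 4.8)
The plan is to reduce NMTS to $J2 \mid no\text{-}idle, no\text{-}wait \mid C_{\max}$ in polynomial time. Given an NMTS instance with sets $X=\{x_1,\dots,x_m\}$, $Y=\{y_1,\dots,y_m\}$, sizes $s(\cdot)$, and targets $\langle t_1,\dots,t_m\rangle$, I would construct a two-machine job shop instance in which each element of $X$ and each element of $Y$ gives rise to a small group of jobs (or operations), and the targets $t_i$ are encoded by ``gadget'' jobs that force a feasible no-idle/no-wait schedule to pair exactly one $X$-element with one $Y$-element summing to $t_i$. The key leverage is the extremely rigid structure imposed by the no-idle/no-wait combination: as in Lemma~\ref{l0} and equation~(\ref{myeq}), consecutive operations on a machine must have exactly matching processing times, so the schedule behaves like a domino chain, and on each machine the sequence of operations is essentially forced once a few ``anchor'' operations are placed. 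In the job shop, the two machines have different job sequences, which gives enough freedom to encode the bipartite matching of $X$ against $Y$ while still keeping each machine's timeline a forced concatenation.

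Concretely, I would: (i) choose a large constant, say $L$ bigger than any $s(k)$, and represent element $x_j$ by an operation of length $L + s(x_j)$ and element $y_j$ by an operation of length $L - s(y_j)$ (or a similar affine encoding), so that a block on machine $M_1$ can have total length $2L + t_i$ exactly when an $x$ and a $y$ with $s(x)+s(y)=t_i$ are placed together; (ii) add $m$ ``frame'' jobs, one per target $t_i$, whose first operation on $M_1$ has length tied to $t_i$ and whose second operation on $M_2$ re-synchronizes the chain, so that the matching-time equalities forced by~(\ref{myeq}) on $M_1$ are satisfiable iff the elements split into the $D_i$ with $\sum_{k\in D_i}s(k)=t_i$; (iii) add padding/connector jobs as needed so that both machine timelines can be completed into a single valid no-idle block and the whole thing is connected (so that a feasible schedule exists at all when the NMTS answer is ``yes''). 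I would make the encoding use only processing times polynomially bounded in the NMTS sizes, so that strong $NP$-hardness is preserved (this is exactly where I would invoke that NMTS is $NP$-complete in the strong sense).

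Then I would prove the two directions. For the forward direction: from a valid NMTS partition $D_1,\dots,D_m$, build the schedule by laying the blocks for $D_1,\dots,D_m$ in order on $M_1$ interleaved with the frame operations, and the mirrored blocks on $M_2$; verify that every pair of consecutive operations on each machine has equal length (so no-idle holds), that each job's two operations abut (so no-wait holds), and compute the makespan, which will be a fixed value $C^*$ determined by $m$, $L$, and $\sum_i t_i$. For the reverse direction: assume a feasible schedule with makespan at most $C^*$ exists; use Lemma~\ref{l0}-type reasoning on each machine to argue the operations must group into $m$ blocks whose total lengths are forced to be $2L+t_i$ (the frame jobs pin down the block boundaries and which $t_i$ each block corresponds to), and since each block contains exactly one $x$-operation and one $y$-operation of the above forms, its length being $2L+t_i$ forces $s(x)+s(y)=t_i$; this recovers a valid NMTS partition. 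Feasibility of the scheduling decision problem thus coincides with a ``yes'' answer to NMTS.

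The main obstacle I expect is designing the frame/connector gadgets so that the forced domino structure on \emph{both} machines simultaneously (a) does not accidentally allow an $x$-$x$ or $y$-$y$ pairing or a block with the wrong number of elements, (b) keeps the instance connected and actually feasible in the ``yes'' case, and (c) keeps all processing times polynomially bounded. In particular, since the no-idle constraint makes each machine a single uninterrupted block, I must ensure the frame operations can be threaded through without creating an unavoidable idle gap, and that the job-shop routing (which operation of a job goes on which machine first) is chosen so the matching is genuinely encoded rather than trivially forced or trivially impossible. Once the gadget is right, the correctness proof is a routine but somewhat tedious case analysis driven entirely by the equality~(\ref{myeq}); I would also remark at the end that the open shop case follows by a similar or adapted construction, since the open shop only relaxes the routing further.
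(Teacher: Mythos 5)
There is a genuine gap: you have correctly guessed the source problem (NMTS) and the general shape of the reduction (pair one $X$-element with one $Y$-element against a target gadget), which is indeed what the paper does, but the actual gadget and the forcing argument --- which constitute essentially the entire proof --- are left as ``to be designed,'' and the structural principle you propose to drive them does not hold in the job shop. Equation~(\ref{myeq}) and Lemma~\ref{l0} express that $p_{2,[j]_\sigma}=p_{1,[j+1]_\sigma}$ in the \emph{flow shop}, where the no-wait constraint forces both machines to carry the same job sequence; in the job shop the two machines carry different sequences (and jobs have different routings), so there is no ``domino chain'' of matching consecutive lengths on a single machine. Indeed, in the paper's construction consecutive operation lengths on $M_1$ are $1,\,t_k+3P,\,1,\,1,\,\dots$, nothing like equal. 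Relatedly, your mechanism for pinning a ``block'' on $M_1$ to total length $2L+t_i$ is unspecified: under no-idle, $M_1$ is one uninterrupted block with no internal boundaries, so nothing on $M_1$ alone can delimit or measure such a block; the paper instead enforces $s(x)+s(y)=t_k$ by \emph{cross-machine parallelism} --- a single $t$-operation of length $t_k+3P$ on $M_1$ must run exactly alongside an $x$-operation of length $s(x)+P$ followed by a $y$-operation of length $s(y)+2P$ on $M_2$, with the routings ($x$- and $t$-jobs go $M_1\rightarrow M_2$, $y$-jobs go $M_2\rightarrow M_1$) and the order-of-magnitude separators $P,2P,3P$ chosen precisely to rule out every adjacency other than the pattern $x\!-\!t\!-\!y$ on $M_1$ / $x\!-\!y\!-\!t$ on $M_2$. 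That case analysis is not ``routine but tedious'' bookkeeping after the fact; it is the proof.

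Two smaller points. Your proposed affine encoding $L+s(x_j)$ and $L-s(y_j)$ gives a pair sum $2L+s(x)-s(y)$, which matches $2L+t_i$ only when $s(x)-s(y)=t_i$, not $s(x)+s(y)=t_i$; you need the same sign on both (the paper uses $s(x_i)+P$ and $s(y_i)+2P$). And your closing remark that the open shop ``follows by a similar or adapted construction'' is correct in spirit (the paper reuses the same instance) but again requires a separate argument, since the open shop's free routing admits additional candidate patterns that must be ruled out or shown to also encode the matching.
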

\begin{proof}
%We show that NMTS $\propto J2 | no-idle, no-wait | C_{\max}$.
We show that NMTS polynomially reduces to problem $J2 | no-idle, no-wait | C_{\max}$.
For any given instance of NMTS, we generate in $O(m)$ time an instance
of $J2 | no-idle, no-wait | C_{\max}$ with $n=3m$ jobs in the following way.
Jobs $J_1,...,J_m$ and jobs $J_{2m+1},...,J_{3m}$ follow the 
$M_1 \rightarrow M_2$  processing route.  
Jobs $J_{m+1},...,J_{2m}$ follow the 
$M_2 \rightarrow M_1$  processing route. 
Let us introduce $P = \sum_{i=1}^m t_i + \sum_{i=1}^m s(x_i) + \sum_{i=1}^m s(y_i)$.
The processing times are as follows:
$p_{1,1},...,p_{1,2m} = 1$; 
$p_{1,2m+i} = t_i+3P, \; \forall i = 1,...,m$;
$p_{2,i} = s(x_i) +P, \; \forall i = 1,...,m$; $p_{2,m+i} = s(y_i) +2P, \; \forall i = 1,...,m$;
$p_{2,2m+1},...,p_{2,3m} = 2$. 

\bigskip

\noindent
From now on, we will denote jobs $J_1,...,J_m$ as the $x$-jobs (as they relate to set $X$ of NMTS).
Similarly, jobs $J_{m+1},...,J_{2m}$ are denoted as the $y$-jobs and
jobs $J_{2m+1},...,J_{3m}$ are denoted as the $t$-jobs. 
Also, we will refer to a pattern $\alpha-\beta-\gamma$ whenever a triplet $\alpha-\beta-\gamma$ is consecutively
repeated on one machine. As an example, the triplet $x-t-y$ on $M_1$ indicates that a $x$-job is immediately followed by a $t$-job and then by a $y$-job on $M_1$. Correspondingly, the pattern $x-t-y$ on $M_1$ indicates that the sequence on $M_1$ is given
by $m$ consecutive triplets $x-t-y$.

\bigskip

\noindent
The total load of machine $M_1$ is given by $L_1 = \sum_{i=1}^{3m} p_{1,i} = 2m + 3mP + \sum_{i=1}^{m} t_i$.
Also,  the total load of machine $M_2$ is given by 
$L_2 = \sum_{i=1}^{3m} p_{2,i} = 2m + 3mP + \sum_{i=1}^{m} s(x_i)+ \sum_{i=1}^{m} s(y_i)$.
Hence $L_1 = L_2$.
Then, $L = \max \{L_1,L_2\} = L_1 = L_2$ constitutes a trivial lower bound on the makespan of the problem.
We remark that, on the one hand, the processing times on $M_1$ are unitary for $x$-jobs and $y$-jobs and have value $> 3P$ 
for $t$-jobs. On the other hand, the processing times on $M_2$ have value $2$ for $t$-jobs and value $> P$ for $x$-jobs and 
$> 2P$ for $y$-jobs respectively.
\bigskip

\noindent
In any feasible no-idle no-wait schedule, 
on $M_1$  every $t$-job (except possibly the last one) must be immediately followed by a $y$-job first and then by a $x$-job,
that is a $t$-job necessarily induces on $M_1$ a triplet $t-y-x$.
Indeed, two $t$-jobs cannot be adjacent on $M_1$, or else there would be idle time on $M_2$ between the completion of the 
first $t$-job and the start of the second $t$-job.
Also, a $t$-job cannot be immediately followed by an $x$-job on $M_1$, or else
the $x$-job would be waiting one unit of time between its completion on $M_1$ and its start on $M_2$.
Finally, a triplet $t-y-y$ cannot hold as it would induce
an idle time between the two $y$-jobs on $M_1$ 
and a triplet $t-y-t$ cannot hold as it would induce an idle time between the two $t$-jobs on $M_2$.
As a consequence, as there are $m$ $t$-jobs, a feasible schedule on $M_1$ follows either 
pattern $t-y-x$, or pattern $y-x-t$ or pattern $x-t-y$. 

\bigskip
\noindent
First consider pattern $y-x-t$ on $M_1$. This means that the first $y$-job, say job $J_j$, would be the first job in the sequence of $M_1$
and (due to the $M_2 \rightarrow M_1$ processing order) would be processed first
on machine $M_2$ so that $C_{1,j}=C_{2,j} + 1$. Hence, the following job on $M_2$ is necessarily the $x$-job, say job $J_i$, with $C_{1,i}=C_{1,j}+1=C_{2,j}+2$ which means that there is an idle time on $M_2$ between the end of job $J_j$ and the beginning of job $J_i$. So, pattern 
$y-x-t$ on $M_1$ cannot lead to a feasible schedule.

%\noindent
%For any given triplets, let denote by $J_i$ the related $x$-job,
%by $J_j$ the related $y$-job and by $J_k$ the related $t$-job.
%Thus, in order to stick to the no-idle no-wait constraint, on $M_2$, we must have
%$C_{2,j} = C_{1,k}$, $C_{2,k} = C_{2,j}+2 = C_{1,i}$ and $C_{2,i} =C_{1,i}+p_{2,i}$.
%This corresponds to have on $M_2$ a triplet $y-t-x$.
%Figure \ref{fig_t_y_x} shows the typical shape of a feasible schedule. 

%\begin{figure*}[!ht]
%    \begin{center}
%        \includegraphics[height=8.5cm,clip=true]{picture_t_y_x.pdf}
%\vspace*{-5.2cm}				
%        \caption[Fig]{Shape of a feasible schedule for problem $J2 | no-idle, no-wait | C_{\max}$}
%        \label{fig_t_y_x}
%    \end{center}
%\end{figure*}

%\begin{figure*}[!ht]
%    \begin{center}
%        \input{figure9}			
%        \caption[Fig]{Shape of a feasible schedule for problem $J2 | no-idle, no-wait | C_{\max}$}
%        \label{fig_t_y_x}
%    \end{center}
%\end{figure*}

\bigskip

\noindent
Hence, every feasible sequence on $M_1$ either follows pattern 
$x-t-y$ or pattern $t-y-x$. 

\bigskip

\noindent
In the first case,
the pattern on $M_2$ is $x-y-t$ and the relevant Gantt diagram is shown in Figure 
\ref{fig_x_y_t} for the case of $6$ jobs.
In this case, the makespan is given by the load of $M_2$ plus the processing time on $M_1$ of the first $x$-job,
that is $C_{\max} = L_2 +1 = L+1$.
Further, note that each
$t$-job $J_k$ on $M_1$ starts and completes when an $x$-job $J_j$ starts and a $y$-job $J_i$ completes on $M_2$, respectively.
Also, the sum $p_{2,i}+p_{2,j}$ of the processing times on $M_2$ of 
the $x$-job and the $y$-job is equal to the processing time $p_{1,k}$ on $M_1$ of the $t$-job.
But then, it provides a true assignment to the corresponding NMTS problem,
as $p_{2,i}+p_{2,j} = s(x_i)+s(y_j) +3P$ and $p_{1,k}=t_k+3P$, that is $s(x_i)+s(y_j) = t_k$.

%\begin{figure*}[!ht]
%    \begin{center}
%        \includegraphics[height=8.5cm,clip=true]{picture_x_t_y.pdf}
%\vspace*{-5.2cm}				
%        \caption[Fig]{The $x-t-y$ ($M_1$) / $x-y-t$ ($M_2$) patterns sequence}
%        \label{fig_x_y_t}
%    \end{center}
%\end{figure*}

\begin{figure*}[!ht]
    \begin{center}
        \ifx\JPicScale\undefined\def\JPicScale{1}\fi
\unitlength \JPicScale mm
\begin{picture}(110,19)(0,0)
\put(0,13){\makebox(0,0)[cc]{$M_1$}}

\put(1,3){\makebox(0,0)[cc]{$M_2$}}

\linethickness{0.3mm}
\put(6,10){\line(1,0){104}}
\linethickness{0.3mm}
\put(6,0){\line(1,0){104}}
\linethickness{0.3mm}
\put(6,16){\line(1,0){6}}
\put(6,10){\line(0,1){6}}
\put(12,10){\line(0,1){6}}
\put(6,10){\line(1,0){6}}
\put(8,19){\makebox(0,0)[cc]{$x$-job}}

\linethickness{0.3mm}
\put(12,10){\line(1,0){33}}
\put(12,10){\line(0,1){6}}
\put(45,10){\line(0,1){6}}
\put(12,16){\line(1,0){33}}
\put(28,13){\makebox(0,0)[cc]{$t$-job}}

\linethickness{0.3mm}
\put(12,6){\line(1,0){14}}
\put(12,0){\line(0,1){6}}
\put(26,0){\line(0,1){6}}
\put(12,0){\line(1,0){14}}
\linethickness{0.3mm}
\put(26,6){\line(1,0){19}}
\put(26,0){\line(0,1){6}}
\put(45,0){\line(0,1){6}}
\put(26,0){\line(1,0){19}}
\put(19,3){\makebox(0,0)[cc]{$x$-job}}

\put(36,3){\makebox(0,0)[cc]{$y$-job}}

\put(45,19){\makebox(0,0)[cc]{$y$-job}}

\linethickness{0.3mm}
\put(45,0){\line(1,0){12}}
\put(45,0){\line(0,1){6}}
\put(57,0){\line(0,1){6}}
\put(45,6){\line(1,0){12}}
\put(50,3){\makebox(0,0)[cc]{$t$-job}}

\linethickness{0.3mm}
\put(45,16){\line(1,0){6}}
\put(45,10){\line(0,1){6}}
\put(51,10){\line(0,1){6}}
\put(45,10){\line(1,0){6}}
\linethickness{0.3mm}
\put(51,16){\line(1,0){6}}
\put(51,10){\line(0,1){6}}
\put(57,10){\line(0,1){6}}
\put(51,10){\line(1,0){6}}
\put(56,19){\makebox(0,0)[cc]{$x$-job}}

\linethickness{0.3mm}
\put(57,10){\line(1,0){33}}
\put(57,10){\line(0,1){6}}
\put(90,10){\line(0,1){6}}
\put(57,16){\line(1,0){33}}
\put(73,13){\makebox(0,0)[cc]{$t$-job}}

\linethickness{0.3mm}
\put(57,6){\line(1,0){20}}
\put(57,0){\line(0,1){6}}
\put(77,0){\line(0,1){6}}
\put(57,0){\line(1,0){20}}
\linethickness{0.3mm}
\put(77,6){\line(1,0){13}}
\put(77,0){\line(0,1){6}}
\put(90,0){\line(0,1){6}}
\put(77,0){\line(1,0){13}}
\put(68,3){\makebox(0,0)[cc]{$x$-job}}

\put(84,3){\makebox(0,0)[cc]{$y$-job}}

\put(92,19){\makebox(0,0)[cc]{$y$-job}}

\linethickness{0.3mm}
\put(90,0){\line(1,0){12}}
\put(90,0){\line(0,1){6}}
\put(102,0){\line(0,1){6}}
\put(90,6){\line(1,0){12}}
\put(95,3){\makebox(0,0)[cc]{$t$-job}}

\linethickness{0.3mm}
\put(90,16){\line(1,0){6}}
\put(90,10){\line(0,1){6}}
\put(96,10){\line(0,1){6}}
\put(90,10){\line(1,0){6}}
\end{picture}			
        \caption[Fig]{The $x-t-y$ on $M_1$ / $x-y-t$ on $M_2$ patterns sequence}
        \label{fig_x_y_t}
    \end{center}
\end{figure*}
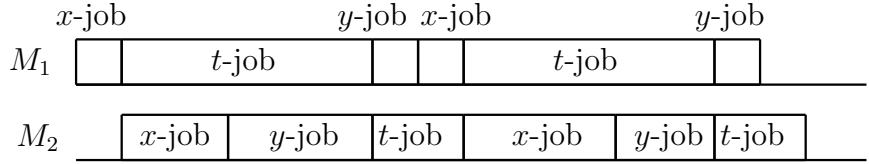

\noindent
In the latter case, the sequence on $M_1$ follows pattern $t-y-x$, that is starts with a $t$-job and 
ends with an $x$-job. The corresponding pattern on $M_2$ is $y-t-x$.
Hence, as the $x$-jobs follow the 
$M_1 \rightarrow M_2$  processing route, the last $x$-job on $M_1$ will then need to be processed also on $M_2$.
Thus, if we denote by $J_h$ this last $x$-job, the makespan will be $C_{\max} = L_1 + p_{2,h} > L_1 + P = L + P$ which is worse than the schedule obtained with pattern $x-t-y$ on $M_1$.\\

To conclude, if there exists a feasible schedule for the two-machine job shop problem with the shape $x-t-y$ on $M_1$ and $x-y-t$ on $M_2$
and value $=L+1$,
then this schedule is optimal and enables to derive in $O(m)$ time a yes answer to NMTS. Conversely, if there is no feasible solution to the job shop problem or the value is $> L+P$, then the answer to NMTS is no. This shows that the two-machine job shop problem is NP-hard. 
%The NP-hardness in the strong sense comes from the fact that the %proposed reduction is also a pseudo-polynomial reduction (see %\cite{GJ82} for the definition of a pseudo-polynomial reduction).

%\bigskip

%\noindent
%To summarize, either, the solution of the generated instance of the 
%$J2 | no-idle, no-wait | C_{\max}$ problem exists and has value $L+1$ 
%and correspondingly provides in $O(m)$ time a true assignment to the related NMTS problem,
%or there is no feasible solution with value $L+1$ (either no feasible schedule or 
%$C_{\max} > L+P$) and there exists no true assignment to the related NMTS problem.
\end{proof}

\section{Two-machine no-idle no-wait open shop scheduling}
\label{jss}
In this section we consider problem $O2 | no-idle, no-wait | C_{\max}$.
In the open shop configuration, 
no specific routing is assumed for each
job which can be either started first on $M_1$ and then on $M_2$ or viceversa.
We establish the complexity of this problem 
%$J2 | no-idle, no-wait | C_{\max}$ 
%and 
%$O2 | no-idle, no-wait | C_{\max}$ 
%by proving their 
%$NP$-Hardness in the strong sense. 
by proving that it is
$NP$-Hard in the strong sense. 

\begin{prop}\label{o2}
Problem $O2 | no-idle, no-wait | C_{\max}$ is $NP$-Hard in the strong sense.
\end{prop}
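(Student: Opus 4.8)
The plan is to mirror the proof of Proposition~\ref{j2} and reduce NMTS to $O2 | no-idle, no-wait | C_{\max}$. From an NMTS instance (disjoint $m$-element sets $X,Y$, sizes $s(\cdot)$, target vector $\langle t_1,\dots,t_m\rangle$) I would build an instance with $n=3m$ jobs --- $m$ $x$-jobs, $m$ $y$-jobs, $m$ $t$-jobs --- whose processing times are of the same flavour as in the job shop construction: $x$- and $y$-jobs get a tiny operation ($1$) on one side and an operation of size $s(x_i)+P$ resp.\ $s(y_i)+2P$ on the other, while $t$-jobs get a small operation ($2$) and a large one $t_i+3P$, with $P$ a big parameter dominating all $s$-values and all targets. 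As before $L_1=L_2=L$, so $L$ is a trivial lower bound on $C_{\max}$, and the intended ``good'' schedule has value $L+1$.

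The new ingredient, and the main obstacle, is that the open shop imposes no routing, so a priori every job could run in either direction and the clean $M_1\to M_2$ / $M_2\to M_1$ bookkeeping used in Proposition~\ref{j2} no longer holds by fiat. The heart of the proof is to show that this freedom is useless for schedules of value below $L+P$: exploiting the rigidity of no-idle/no-wait (on each machine, consecutive operations must have matching lengths, in the spirit of~(\ref{myeq})) together with the coarse size classes $\{1,2,\Theta(P),\Theta(3P)\}$ which cannot be matched against one another across machines without opening an idle interval, I would argue that in any schedule with $C_{\max}<L+P$ every job must be routed exactly as in the job shop instance. One checks, routing choice by routing choice, that a ``flipped'' job either forces idle time on some machine (violating no-idle) or shoves a large operation past time $L$, pushing the makespan to at least $L+P$. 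Special care is needed for the first and last job on each machine, where the no-wait neighbour of the boundary operation lies outside that machine's processing block and the timing equalities degenerate; those are exactly the slots where a flipped routing could try to hide a short idle gap, and ruling them out is where most of the technical work lies.

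Once the routings are pinned down, the situation is identical to that of Proposition~\ref{j2}: the only feasible patterns are $x$-$t$-$y$ on $M_1$ / $x$-$y$-$t$ on $M_2$ with $C_{\max}=L+1$, and $t$-$y$-$x$ on $M_1$ / $y$-$t$-$x$ on $M_2$ with $C_{\max}>L+P$. In the first, each $t$-job of length $t_k+3P$ exactly spans an $x$-job and a $y$-job of combined length $s(x_i)+s(y_j)+3P$, giving $s(x_i)+s(y_j)=t_k$ and hence a valid NMTS partition; conversely, an NMTS partition assembles such a schedule. Therefore NMTS has a solution iff the constructed open shop instance admits a schedule of value $L+1$, and since NMTS is strongly $NP$-complete while all numbers in the construction are polynomially bounded in the input size, $O2 | no-idle, no-wait | C_{\max}$ is $NP$-Hard in the strong sense.
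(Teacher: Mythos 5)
Your construction and target value $L+1$ coincide with the paper's, and the overall reduction is sound, but the central lemma you propose to prove --- that in any schedule of value below $L+P$ every job must be routed exactly as in the job shop instance of Proposition~\ref{j2} --- is false, so the ``routing choice by routing choice'' elimination you describe cannot succeed. The instance is symmetric under exchanging the roles of the $x$-jobs and the $y$-jobs (the difference between $+P$ and $+2P$ plays no role in the timing argument): the schedule with pattern $y$-$t$-$x$ on $M_1$ and $y$-$x$-$t$ on $M_2$ is feasible, has makespan $L+1$, and routes every $x$-job $M_2 \rightarrow M_1$ and every $y$-job $M_1 \rightarrow M_2$, i.e.\ the opposite of the job shop routing. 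Likewise $M_2$ may lead with a $t$-job at time $0$ (pattern $x$-$t$-$y$ on $M_1$ with $t$-$x$-$y$ on $M_2$), which routes all $t$-jobs $M_2 \rightarrow M_1$. The routings are therefore not pinned down, and a proof that tries to pin them down runs straight into these counterexamples.

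The paper instead performs a direct case analysis on which job class can occupy the slots starting at times $0$ and $1$ on each machine. It first rules out $C_{\max}=L$, then shows that value $L+1$ admits exactly four feasible pattern families (the job shop one, its $x$/$y$-swapped mirror, and the two variants where $M_2$ leads with a $t$-job), and finally observes that every one of them still forces each $t$-operation of length $t_k+3P$ to span one $x$-operation and one $y$-operation of total length $s(x_i)+s(y_j)+3P$, hence still certifies an NMTS assignment. Your argument is salvageable --- replace ``the routing is forced to match the job shop'' by ``only these few symmetric patterns are feasible, and each of them encodes an NMTS solution'' --- but as written the proof would fail at its key step.
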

\begin{proof}
We show that NMTS polynomially reduces to problem $O2 | no-idle, no-wait | C_{\max}$
in a way similar to Proposition \ref{j2}.
From any given instance of NMTS, we generate in $O(m)$ time an instance
of $O2 | no-idle, no-wait | C_{\max}$ with $3m$ jobs having the same processing times 
as in the job shop case.

%\bigskip
%\noindent
%We remark, that, clearly, also here, if there exists 
%a true assignment to the related NMTS problem, then a feasible pattern $x-t-y$ on $M_1$ (and corresponding
%pattern $x-y-t$ on $M_2$) exists where all the $t$-jobs and $x$-jobs follow the $M_1 \rightarrow M_2$  processing route
%and all the $y$-jobs follow the $M_2 \rightarrow M_1$ processing route, namely the same solution of the job shop problem.
%Note that, if a true assignment to the related NMTS problem, then
%also a specular feasible solution exists having pattern $y-t-x$ on $M_1$ and $y-x-t$ on $M_2$
%where all the $t$-jobs and $y$-jobs follow the $M_1 \rightarrow M_2$  processing route
%and all the $x$-jobs follow the $M_2 \rightarrow M_1$ processing route.
%The makespan is always given by $L+1$.

\bigskip
\noindent
%We remark that,
%if a true assignment to the related NMTS problem exists, no feasible schedule
%can have two consecutive $x$-jobs or two consecutive $y$-jobs on $M_2$,
%as in the first case the sum of their processing time is $< 3P$, while in the latter case
%is $> 4P$ and all the $t$-jobs have processing time on $M_1$ superior to $3P$ and inferior to $4P$.
To show the reduction, we first show that no feasible solution exists with $C_{\max}=L$
and that, if a feasible solution exists with value $C_{\max}=L+1$, then
a true assignment to the related NMTS problem holds and can be derived in $O(m)$ time from the solution
of the $O2 | no-idle, no-wait | C_{\max}$  problem.

\bigskip
\noindent
In order to have $C_{\max}=L$, both machines should start processing at time $0$.
However, if an $x$-job (or a $y$-job) starts at time $0$ on $M_1$, then
no job can start at time $0$ on $M_2$ or else the no-wait requirement on the $x$-job
(the $y$-job) would be violated as all processing times on $M_2$ have length $\geq 2$.
Besides,  if a $t$-job starts at time $0$ on $M_1$, whatever job 
starting at time $0$ on $M_2$ would then violate the no-wait requirement 
as any processing time on $M_2$ has length inferior to the processing time
of any $t$-job on $M_1$. To fulfil the no-wait requirement, the considered job on $M_2$
would need to start at a time $t_0 > > 0$ but then $C_{\max}$ would be much larger than $L$. 
Thus, no feasible solution exists with $C_{\max}=L$.

\bigskip
\noindent
In order to have $C_{\max}=L+1$, the first job cannot start on any machine after time $1$.
Notice, that the same argument explicited above rules out the possibility of
having a $t$-job starting at time $t_0 \leq 1$ on $M_1$.
Also, if an $x$-job (a $y$-job) starts at time $1$  and completes at time $2$ on $M_1$,
in order to fulfil the no-wait requirement on the $x$-job
(the $y$-job), a $t$-job should start at time $0$ on $M_2$.
But then, the $t$-job will start at time $2$ on $M_1$ and the $x$-job ($y$-job) will start at time $2$ on $M_2$.
Correspondingly, to fulfil the no-idle no-wait requirement only a $y$-job ($x$-job) can be processed on $M_2$
once the $x$-job ($y$-job) is completed and the sum of the processing times of the $x$-job and $y$-job on $M_2$
must be equal to the processing time of the $t$-job on $M_1$.
This corresponds to the start of a pattern $x-t-y$ ($y-t-x$) on $M_1$ and $t-x-y$ ($t-y-x$) on $M_2$
where all the $t$-jobs and $y$-jobs ($x$- jobs) follow the $M_2 \rightarrow M_1$  processing route
and all the $x$-jobs ($y$-jobs) follow the $M_1 \rightarrow M_2$ processing route.
The resulting makespan is $L+1$. 
The corresponding two feasible schedules for a 6-job problem are depicted in Figure \ref{f1}.

%\begin{figure*}[!ht]
%    \begin{center}
%        \includegraphics[height=8.5cm,clip=true]{picture_f1.pdf}
%\vspace*{-5.2cm}				
%        \caption[Fig]{Two open shop patterns}
%        \label{f1}
%    \end{center}
%\end{figure*}

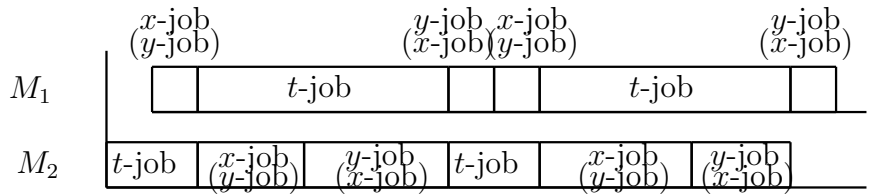
\begin{figure*}[!ht]
    \begin{center}
        \ifx\JPicScale\undefined\def\JPicScale{1}\fi
\unitlength \JPicScale mm
\begin{picture}(110,22)(0,0)
\put(0,13){\makebox(0,0)[cc]{$M_1$}}

\put(1,3){\makebox(0,0)[cc]{$M_2$}}

\linethickness{0.3mm}
\put(16,10){\line(1,0){94}}
\linethickness{0.3mm}
\put(10,0){\line(1,0){100}}
\linethickness{0.3mm}
\put(16,16){\line(1,0){6}}
\put(16,10){\line(0,1){6}}
\put(22,10){\line(0,1){6}}
\put(16,10){\line(1,0){6}}
\put(19,22){\makebox(0,0)[cc]{$x$-job}}

\linethickness{0.3mm}
\put(22,10){\line(1,0){33}}
\put(22,10){\line(0,1){6}}
\put(55,10){\line(0,1){6}}
\put(22,16){\line(1,0){33}}
\put(38,13){\makebox(0,0)[cc]{$t$-job}}

\linethickness{0.3mm}
\put(22,6){\line(1,0){14}}
\put(22,0){\line(0,1){6}}
\put(36,0){\line(0,1){6}}
\put(22,0){\line(1,0){14}}
\linethickness{0.3mm}
\put(36,6){\line(1,0){19}}
\put(36,0){\line(0,1){6}}
\put(55,0){\line(0,1){6}}
\put(36,0){\line(1,0){19}}
\put(29.38,3.75){\makebox(0,0)[cc]{$x$-job}}

\put(46,4){\makebox(0,0)[cc]{$y$-job}}

\put(55,22){\makebox(0,0)[cc]{$y$-job}}

\linethickness{0.3mm}
\put(55,0){\line(1,0){12}}
\put(55,0){\line(0,1){6}}
\put(67,0){\line(0,1){6}}
\put(55,6){\line(1,0){12}}
\put(60,3){\makebox(0,0)[cc]{$t$-job}}

\linethickness{0.3mm}
\put(55,16){\line(1,0){6}}
\put(55,10){\line(0,1){6}}
\put(61,10){\line(0,1){6}}
\put(55,10){\line(1,0){6}}
\linethickness{0.3mm}
\put(61,16){\line(1,0){6}}
\put(61,10){\line(0,1){6}}
\put(67,10){\line(0,1){6}}
\put(61,10){\line(1,0){6}}
\put(66,22){\makebox(0,0)[cc]{$x$-job}}

\linethickness{0.3mm}
\put(67,10){\line(1,0){33}}
\put(67,10){\line(0,1){6}}
\put(100,10){\line(0,1){6}}
\put(67,16){\line(1,0){33}}
\put(83,13){\makebox(0,0)[cc]{$t$-job}}

\linethickness{0.3mm}
\put(67,6){\line(1,0){20}}
\put(67,0){\line(0,1){6}}
\put(87,0){\line(0,1){6}}
\put(67,0){\line(1,0){20}}
\linethickness{0.3mm}
\put(87,6){\line(1,0){13}}
\put(87,0){\line(0,1){6}}
\put(100,0){\line(0,1){6}}
\put(87,0){\line(1,0){13}}
\put(78,4){\makebox(0,0)[cc]{$x$-job}}

\put(94,4){\makebox(0,0)[cc]{$y$-job}}

\put(102,22){\makebox(0,0)[cc]{$y$-job}}

\linethickness{0.3mm}
\put(100,16){\line(1,0){6}}
\put(100,10){\line(0,1){6}}
\put(106,10){\line(0,1){6}}
\put(100,10){\line(1,0){6}}
\put(19,19){\makebox(0,0)[cc]{($y$-job)}}

\put(55,19){\makebox(0,0)[cc]{($x$-job)}}

\put(66,19){\makebox(0,0)[cc]{($y$-job)}}

\put(102,19){\makebox(0,0)[cc]{($x$-job)}}

\linethickness{0.3mm}
\put(10,0){\line(1,0){12}}
\put(10,0){\line(0,1){6}}
\put(22,0){\line(0,1){6}}
\put(10,6){\line(1,0){12}}
\put(15,3){\makebox(0,0)[cc]{$t$-job}}

\put(29.38,1.25){\makebox(0,0)[cc]{($y$-job)}}

\put(46.25,1.25){\makebox(0,0)[cc]{($x$-job)}}

\put(78.12,1.25){\makebox(0,0)[cc]{($y$-job)}}

\put(94.38,1.25){\makebox(0,0)[cc]{($x$-job)}}

\linethickness{0.3mm}
\put(10,0){\line(0,1){18.12}}
\end{picture}				
        \caption[Fig]{Two open shop patterns}
        \label{f1}
    \end{center}
\end{figure*}

\bigskip
\noindent
Alternatively an $x$-job ($y$-job) must start at time $0$ on $M_1$ and at time $1$ on $M_2$ and, by applying the same reasoning
it turns out that the only way to have a feasible schedule with $C_{\max}=L+1$ is to stick to  
patterns $x-t-y$ ($y-t-x$) on $M_1$ and $x-y-t$ ($y-x-t$) on $M_2$. We remark that 
patterns $x-t-y$ on $M_1$ and $x-y-t$ on $M_2$ are the same patterns of the job shop problem,
hence Figure \ref{fig_x_y_t} for a $6$-job instance holds also here. Besides, by swapping every $x$-job
with a $y$-job in Figure \ref{fig_x_y_t}, we get the graphical representation of the solution
with patterns $y-t-x$ on $M_1$ and $y-x-t$ on $M_2$.

\bigskip
\noindent
By the same reasoning of the job shop case, we can see for all four cases, that if a 
feasible schedule with $C_{\max} = L+1$ exists, then it provides also a true assignment to the corresponding NMTS problem
by matching every pair of $x$-job, $y$-job to the related $t$-job.
Hence, given the solution of the generated instance of the $O2 | no-idle, no-wait | C_{\max}$ problem,
either a feasible solution exists with $C_{\max}=L+1$ and a true assignment to the corresponding NMTS problem
is then obtained in $O(m)$ time, or else either $C_{\max} > L+1$ or no feasible solution exists for 
problem $O2 | no-idle, no-wait | C_{\max}$ and correspondingly no true assignment 
exists for the NMTS problem. 
%As in the case of the job shop problem, this reduction is a pseudo-polynomial reduction thus proving the NP-hardness in the strong sense. 
\end{proof}

\section{Conclusions}

We considered no-idle/no-wait shop scheduling problems with makespan
as performance measure. We firstly focused on the $F2 | no-idle, no-wait | C_{\max}$
problem for which we established the connection to a special game of
dominoes, namely the Oriented Single Player Dominoes, and provided an $O(n)$ algorithm 
to solve both problems to optimality. As a byproduct, we also considered 
a special case of the Hamiltonian Path problem (denoted as 
Common/Distinct Successors Directed Hamiltonian Path) and proved that it 
reduces to problem $F2 | no-idle, no-wait | C_{\max}$. Correspondingly, we presented
a new polynomially solvable special case of the Hamiltonian Path problem.
Then, we extended our analysis to the more general $F2 | no-idle, no-wait | C_{\max}$
problem showing that it can be transformed into a two-machine
instance. Correspondingly, by applying the same approach as in the two-machine case,
we showed that it can be solved with complexity $O(mn \log n)$. For this problem we have also 
proposed a vectorial version of the Oriented Single Player Dominoes problem with tiles composed by vectors of numbers . Finally, we
proved that both the two-machine job shop and the two-machine open
shop problems are NP-hard in the strong sense by reduction from the
Numerical Matching with Target Sums.

\section*{Bibliography}
\label{sec:bib}


\begin{thebibliography}{10}
\footnotesize
%\footnotesize
\bibitem{Adiri} 
I. Adiri and D. Pohoryles.
\newblock 
Flowshop / no-idle or no-wait scheduling to minimize the sum of completion times.
\newblock \emph{Naval Research Logistics}, 29, 495--504, 1982.
\bibitem{AL16}
A. Allahverdi.
\newblock
A survey of scheduling problems with no-wait in process.
\newblock
\emph{European Journal of Operational Research}, 255, 665--686, 2016.
\bibitem{bl97}
P. Baptiste and K.H. Lee.
\newblock 
A branch and bound algorithm for the $F | no-idle | C_{\max}$.
\newblock
\emph{Proceedings of the International Conference on Industrial Engineering and
Production Management}, 1, 429--438, 1997.
\bibitem{CH14}
P. Chr\'etienne.
\newblock
On scheduling with the non-idling constraint.
\newblock
\emph{4OR}, 12, 101-121, 2014.
\bibitem{dmw14}
E.D. Demaine, F. Ma and E. Waingarten.
\newblock 
Playing Dominoes Is Hard, Except by Yourself.
\newblock
\emph{FUN 2014, LNCS}, 8496, 137--146, 2014.
\bibitem{FH91}
H. Fleischner.
\newblock
Eulerian Graphs and Related Topics
\newblock 
\emph{Annals of Discrete Mathematics}, Part 1, Volume 2, Elsevier, 1991.
\bibitem{gjs76}
M.R. Garey, D.S. Johnson and R. Sethi.
\newblock
The Complexity of Flowshop and Jobshop
Scheduling.
\newblock 
\emph{Mathematics of Operations Research}, 1, 117--129, 1976.
\bibitem{GJ82}
M.R. Garey and D.S. Johnson.
\newblock {\em Computers and Intractability: A Guide to the Theory of NP-Completeness}.
\newblock Freeman and CO., New York, 1982.
\bibitem{GI01}
K. Giaro.
\newblock
NP-hardness of compact scheduling in simplified open and flow
shops.
\newblock
\emph{European Journal of Operational Research} 130, 90--98, 2001.
\bibitem{gigo64}
P.C. Gilmore, R.E. Gomory.
\newblock
Sequencing a one state variable machine: A solvable case of the traveling salesman problem.
\emph{Operations Research} 12, 655--679, 1964.
\bibitem{gs09}
Y. Goncharov and S. Sevastyanov
\newblock
The flow shop problem with no-idle constraints: A review and approximation.
\newblock
\emph{European Journal of Operational Research} 196, 450--456, 2009.
\bibitem{HS96}
N.G. Hall and C. Sriskandarajah.
\newblock
A Survey of Machine Scheduling Problems with Blocking and No-Wait in Process.
\newblock
\emph{Operations Research}, 44, 510--525, 1996.\bibitem{hjm12}
W. H\"ohn, T. Jacobs and N. Megow.
\newblock On Eulerian extensions and their application to no-wait flowshop
scheduling.
\newblock \emph{Journal of Scheduling}, 15, 295--309, 2012.
\bibitem{Johnson54}
S.M. Johnson.
\newblock
Optimal two- and three-stage production schedules with setup times included.
\newblock
\emph{Naval Research Logistics Quarterly}, 1, 61--68, 1954.
\bibitem{kk07}
P.J. Kalczynski and J. Kamburowski.
\newblock
On no-wait and no-idle flow shops with makespan criterion.
\newblock
\emph{European Journal of Operational Research}, 178, 677-685, 2007.
\bibitem{LLRS93} E.L. Lawler, J.K.
Lenstra, A.H.G. {Rinnooy~Kan}
and D.B. Shmoys.
\newblock
Sequencing and Scheduling: Algorithms and Complexity.
\newblock
\emph{S.C. Graves, A.H.G. Rinnooy Kan and P. Zipkin (Eds.):
Handbooks in Operations Research and Management Science vol 4:
Logistics of Production  and inventory, North-Holland,
Amsterdam}, 445 -- 522, 1993.
\bibitem{Reddi72}
S.S. Reddi and C.V. Ramamoorthy.
\newblock
On the flowshop sequencing problem with no-wait in process.
\newblock
\emph{Operational Research Quarterly}, 23, 323--331, 1972.
\bibitem{Rock84}
H. R\"ock.
\newblock 
The three machine no-wait flowshop problem is NP-complete.
\newblock
\emph{Journal of the Association for Computing Machinery}, 31, 336--345, 1984.
\end{thebibliography}
\end{document}